\theoremstyle{plain}
\newtheorem{thm}{Theorem}
\newtheorem{lem}[thm]{Lemma}
\newtheorem{prop}[thm]{Proposition}
\newtheorem{cor}[thm]{Corollary}
\theoremstyle{remark}
\newtheorem{pt}{Part}
\newcommand{\PR}[1]{\ensuremath{{\mathbf{Pr}\left[#1\right]}}}
\newcommand{\EXP}[1]{\ensuremath{{\mathbf{E}\left(#1\right)}}}
\def\ex{{\rm\bf E}}
\def\pr{{\rm\bf Pr}}
\def\var{{\rm\bf Var}}
\newcommand{\remove}[1]{}
\newcommand{\real}{\ensuremath {\mathbb R} }
\newcommand{\UT} {\ensuremath{[0,1)^2}}
\newcommand{\nset} {\ensuremath{\{1,\ldots,n\}}}
\newcommand{\nRG} {\ensuremath{G(\cX;r)}}
\newcommand{\nRGl} {\ensuremath{G(\cX;r_\ell)}}
\newcommand{\nRGu} {\ensuremath{G(\cX;r_u)}}
\newcommand{\nRGR} {\ensuremath{(G(\cX;r))_{r\in\real^+}}}
\newcommand{\cE} {\ensuremath{\mathcal E}}
\newcommand{\cF} {\ensuremath{\mathcal F}}
\newcommand{\cS} {\ensuremath{\mathcal S}}
\newcommand{\cX} {\ensuremath{\mathcal X}}
\newcommand{\cY} {\ensuremath{\mathcal Y}}
\newcommand{\tK} {\ensuremath{\widetilde K}}
\newcommand{\ar}[1] {\mathsf{Area}(#1)}
\title{On the Probability of the Existence of Fixed-Size Components in Random Geometric Graphs\thanks{Partially
supported by the Spanish CYCIT: TIN2007-66523
(FORMALISM). The first and third author are partially supported by 7th Framework under 
contract ICT-2007.82 (FRONTS). The first author was also supported by \emph{La distinci\'{o} per a
la promoci\'{o} de la recerca de la Generalitat de Catalunya, 2002}.}}
\author{J.~D\' \i az$^1$ \qquad D.~Mitsche$^2$ \qquad X.~P\' erez-Gim\'enez$^1$ \smallskip \\
{\small
$^1$Llenguatges i Sistemes Inform\`{a}tics, UPC, 08034
Barcelona }\\
{\small $^2$Institut f\"ur Theoretische Informatik, ETH Z\"urich, 8092 Z\"urich} \\
{\small\tt \{diaz,xperez\}@lsi.upc.edu, dmitsche@inf.ethz.ch}}
\date{}
\begin{document}
\maketitle
\begin{abstract}
In this work we give precise asymptotic expressions on the probability of 
the existence of fixed-size components at the threshold of connectivity
for random geometric graphs.
\end{abstract}
\section{Introduction and basic results on Random Geometric Graphs.}\label{sec:intro}
Recently, quite a bit  of work
has been done on  {\em Random Geometric graphs}, due to the importance of these graphs
as theoretical models for ad hoc networks (for applications we refer to~\cite{Hekmat06}). 
Most of the theoretical results on
random geometric graphs can  be found in the book by 
M.~D.~Penrose~\cite{Penrose03}. In this section we succinctly recall the 
results needed to 
motivate and prove our main theorem.

Given a set of $n$ vertices and a non-negative real $r=r(n)$, each vertex is 
placed at some random position in the unit torus $[0,1)^2$ selected independently
and uniformly at random (u.a.r.). We denote by $X_i=(x_i,y_i)$ the random 
position of vertex $i$ for $i\in\nset$, and let $\cX=\cX(n)=\{X_1,\ldots,X_n\}$.
Note that with probability $1$, no two vertices choose the same position and 
thus we restrict the attention to the case that $|\cX|=n$.
We define $\nRG$ as the random graph having $\cX$ as
the vertex set, and with an edge connecting each pair of vertices $X_i$ and $X_j$ 
in $\cX$ at distance $d(X_i,X_j)\le r$, where $d(\cdot,\cdot)$ denotes the 
Euclidean distance in the torus. 

Unless otherwise stated, all our stated results are asymptotic as $n\to\infty$. We use the following standard notation for the 
asymptotic behaviour of sequences of non-negative numbers $a_n$ and $b_n$: $a=O(b)$, if there exist constants $C$ and $n_0$ such that
$a_n\leq Cb_n$ for $n \geq n_0$. Furthermore, $a=\Omega(b)$ if $b=O(a)$,  $a=\Theta(b)$ if $a=O(b)$ and $a=\Omega(b)$ and finally $a=o(b)$ if
$a_n/b_n \rightarrow 0$ as $n \rightarrow \infty$.  As usual, the
abbreviation a.a.s.\ stands for  {\em asymptotically almost surely}, i.e.\ with 
probability $1-o(1)$. All logarithms in this paper are natural logarithms.

Let $K_1$ be the random variable counting the number of isolated vertices in $\nRG$. By multiplying the probability that one vertex is isolated by the number of vertices 
we obtain
\begin{equation}\label{eq:EK1}
\EXP{K_1} = n (1-\pi r^2)^{n-1} = n e^{-\pi r^2 n - O(r^4 n)}.
\end{equation}
Define $\mu:=n e^{-\pi r^2 n}$. Observe from the previous expression that $\mu$ is closely 
related to $\EXP{K_1}$.
In fact, $\mu=o(1)$ iff $\EXP{K_1=o(1)}$, and if $\mu=\Omega(1)$ then $\EXP{K_1}\sim\mu$.
Moreover, the asymptotic behaviour of 
$\mu$ characterizes the connectivity of $\nRG$. The following proposition is well known: a result similar to item (1) can be found in Corollary 3.1 of~\cite{GuptaKumar} and it can also be found in Section 1.4, p.10 of~\cite{Penrose03}, Item (2) is Theorem 13.11 of~\cite{Penrose03}, and Item (3) can as well be found in Section 1.4, p.10 of~\cite{Penrose03}. For the sake of completeness, we give a simple proof of Proposition~\ref{prop:wellknown} in Section~\ref{sec:corollary}.

%
\begin{prop}\label{prop:wellknown} 
 In terms of $\mu$ the connectivity can be characterized as follows:
 \begin{enumerate}
	\item If $\mu\to0$, then a.a.s.\ $\nRG$ is connected.
	\item If $\mu=\Theta(1)$, then a.a.s.\ $\nRG$ consists of one giant 
component of size $>n/2$ and a Poisson number (with parameter $\mu$) of isolated vertices.
	\item If $\mu\to\infty$, then a.a.s.\ $\nRG$ is disconnected. 
	\end{enumerate}
\end{prop}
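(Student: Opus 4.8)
\medskip
\noindent\emph{Proof idea.}
All three parts rest on first/second-moment control of two quantities: the number $K_1$ of isolated vertices, and the number of connected components of intermediate order. Two preliminaries. Since connectedness is monotone increasing in $r$ while $\mu=ne^{-\pi r^2n}$ is decreasing in $r$, for part~(1) it suffices, after possibly decreasing $r$, to treat the case $\mu\ge 1/\log n$, i.e.\ $\pi r^2n\le\log n+\log\log n$; then $r\to0$, $r^4n=o(1)$, and by~\eqref{eq:EK1} $\EXP{K_1}=\mu\,(1+o(1))$. Similarly $\mu\to\infty$ forces $\pi r^2n\le\log n$ eventually, so again $r^4n=o(1)$ and $\EXP{K_1}\sim\mu\to\infty$.

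\medskip
\noindent\emph{Isolated vertices.}
I would evaluate $\PR{i\text{ and }j\text{ both isolated}}=\int_{d(p,q)>r}(1-\ar{B(p,r)\cup B(q,r)})^{\,n-2}\,dp\,dq$ by splitting off the part $d(p,q)>2r$ (where the two disks are disjoint, the integrand equals $(1-2\pi r^2)^{n-2}$, and the region has measure $1-O(r^2)$) from the part $r<d(p,q)\le 2r$ (measure $O(r^2)$, integrand at most $(1-\pi r^2)^{n-2}$). Using $r^4n=o(1)$ this gives $\EXP{K_1(K_1-1)}=(1+o(1))\EXP{K_1}^2+O(r^2n\mu)$, hence $\VAR{K_1}=o(\EXP{K_1}^2)$ when $\mu\to\infty$, and Chebyshev yields $K_1\ge 1$ a.a.s., which is part~(3). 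The same computation carried out for $j$ disks shows $\EXP{K_1(K_1-1)\cdots(K_1-j+1)}=\mu^j+o(1)$ for every fixed $j$ (the higher overlaps being lower order), so by the method of moments $K_1$ is, when $\mu=\Theta(1)$, asymptotically Poisson with parameter $\mu$; this is the isolated-vertex content of part~(2).

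\medskip
\noindent\emph{No component of order between $2$ and $n/2$.}
This is the heart of parts~(1) and~(2): granting it, a.a.s.\ every component is a single vertex or has more than $n/2$ vertices, at most one component exceeds $n/2$, and (as $\EXP{K_1}=o(n)$) not all vertices are isolated, so a.a.s.\ $\nRG$ is one component of order $>n/2$ together with $K_1$ isolated vertices --- which with the Poisson statement is part~(2), and with $\EXP{K_1}\to0$ forces connectedness, part~(1). I would prove it by splitting a putative such component $C$ according to its order $k=|C|$. For $2\le k\le\log^2 n$, bound $\EXP{\#\{\text{components of order }k\}}$ by $\binom nk\int\mathbf 1[\text{the }k\text{ points span a connected graph}]\,(1-\ar A)^{n-k}$, the integral over the positions of the $k$ points, where $A=\bigcup_{v\in C}B(v,r)$ is vertex-free off $C$; bounding the indicator by a sum over the $k^{k-2}$ spanning trees, listing the vertices so that each (after the first) is joined by a tree edge of length $t_j\le r$ to an earlier one, and using the elementary bound $\ar A\ge\pi r^2+c\,r\sum_j t_j$, the integral factorizes over tree edges with each factor $\int_{B(0,r)}e^{-cr n|w|/2}\,dw=O(1/(r^2n^2))$; since $r^2n=\Theta(\log n)$ this gives $\EXP{\#\{\text{order }k\}}=O(\mu)\,k^{-2}(C_0/\log n)^{k-1}$, which sums over $k$ to $o(1)$. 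For $\log^2 n<k\le n/2$, tessellate the torus into $\Theta(n/\log n)$ cells of side $\Theta(r)$: since no vertex off $C$ lies within distance $r$ of $C$, the cells meeting $C$ are cut off from the (nonempty) cells meeting $\cX\setminus C$ by a lattice circuit (possibly non-contractible) made only of empty cells; a Chernoff bound on the number of points per cell forces $C$ to meet $\Omega(\log n)$ cells, hence the circuit to have length $L=\Omega(\sqrt{\log n})$, and a union bound over the at most $(n/\log n)\,C_1^{\,L}$ circuits of length $L$, each empty with probability at most $e^{-\Theta(L\log n)}$, is $o(1)$.

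\medskip
\noindent\emph{Main obstacle.}
The second- and factorial-moment estimates for $K_1$, and the circuit count for large $k$, are routine. The substantive step is the small-$k$ first moment: one must extract from ``$C$ is a whole component'' a vertex-free region whose area beats $\pi r^2$ by an amount that grows with how spread out the $k$ points are, and then organize the tree-indexed integral so that the gain $(r^2n)^{-(k-1)}=\Theta((\log n)^{-(k-1)})$ dominates $\binom nk$ times the tree count --- this is precisely what makes $\sum_k\EXP{\#\{\text{order }k\}}=o(1)$, and is where I would expect to spend most of the effort, together with a clean statement of the area lemma.
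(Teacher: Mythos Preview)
Your overall architecture---monotonicity in $\mu$, factorial moments for $K_1$, then ruling out components of intermediate order---matches the paper's. The paper's own proof of the proposition is much shorter than yours because it delegates the hard step entirely to its main Theorem~\ref{thm:static2}: once $\PR{\tK_2>0}=\Theta(1/\log n)=o(1)$ is known (this is Lemma~\ref{lem:PY} with $\ell=2$), a.a.s.\ the only non-solitary components are isolated vertices, and parts~(1) and~(3) follow by monotonicity from $\PR{\text{connected}}\sim e^{-\mu}$ in the regime $\mu=\Theta(1)$. Your factorial-moment computation for $K_1$ is essentially identical to the paper's (their Case~1/Case~2 split is your ``disks disjoint / disks overlap'' split).

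There is, however, a genuine gap in your small-$k$ first-moment bound. The ``elementary bound'' $\ar A\ge\pi r^2+c\,r\sum_j t_j$ is \emph{false} for an arbitrary spanning tree once $k$ is not bounded: place $k$ points inside a disc of radius $\epsilon\ll r$ and let $T$ be the star rooted at one of them; then $\sum_j t_j$ can be of order $(k-1)\epsilon$, whereas $\ar A\le\pi(r+\epsilon)^2=\pi r^2+O(r\epsilon)$, so no fixed $c>0$ works uniformly in $k$. Since you sum over all $k^{k-2}$ trees and need the inequality for each, the per-edge factorisation of the integral breaks down, and the claimed bound $O(\mu)\,k^{-2}(C_0/\log n)^{k-1}$ is unjustified for $k$ growing with $n$. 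You flag the area lemma as the crux, and rightly so---but it is not a matter of constants: the sum-of-edge-lengths formulation cannot be rescued. The paper's route (inside Lemma~\ref{lem:PY}, Part~1, using the geometry of Lemma~\ref{lem:EZei}) replaces $\sum_j t_j$ by a \emph{diameter}-type quantity: with $\rho$ the distance from the leftmost vertex to the farthest one, $\ar{\cS}\ge\pi r^2(1+\rho/(6r))$; one then integrates over $\rho$ with the remaining $k-2$ vertices confined to the half-disc of radius $\rho$, which produces the correct $(\log n)^{-(k-1)}$ gain without any per-edge factorisation.

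Your large-$k$ Peierls-type circuit argument is in the right spirit and can be made to work. The paper organises this differently (Parts~2--4 of Lemma~\ref{lem:PY}): it splits by diameter and embeddability rather than by $|C|$, and instead of counting separating circuits it directly exhibits a connected union of tessellation cells of area at least $(1+\epsilon/6)\pi r^2$ (or $(11/5)\pi r^2$ in the non-embeddable case) that must be empty of vertices, then takes a union bound over the $\Theta(n/\log n)$ such unions.
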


{From} the definition of $\mu$ we have that $\mu=\Theta(1)$ iff $r=\sqrt{\frac{\log n \pm O(1)}{\pi n}}$. Therefore we conclude that
the property of connectivity of $\nRG$ exhibits a sharp threshold at 
$r=\sqrt{\frac{\log n}{\pi n}}$.
Note that the previous classification of the connectivity of $\nRG$, 
indicates that if $\mu=\Theta(1)$, the components of size $1$  are predominant and
 those components have the main contribution to the connectivity of $\nRG$. In fact 
 if $\mu=\Theta(1)$, the probability that $\nRG$ has some component of 
size greater than $1$ other than the giant component is $o(1)$. 

On the other hand, M.D. Penrose~\cite{Penrose03} studied the number of components in $\nRG$ that are isomorphic to a given fixed graph; equivalently, he studied the probability of finding 
{\em components} of a given size in $\nRG$. However the range of radii 
$r$ covered by Penrose does not exceed 
the {\em thermodynamical threshold} $\Theta(\sqrt{1/n})$ where a giant component appears 
at $\nRG$, which is  below the connectivity threshold treated in the present paper. 
In fact, a percolation argument in~\cite{Penrose03} only shows that with probability $1-o(1)$ 
no components other than isolated vertices and the giant one exist at the connectivity 
threshold, whithout giving accurate bounds on this probability (see Section 1.4 of~\cite{Penrose03} and Proposition 13.12 and 
Proposition 13.13 of~\cite{Penrose03}).

Throughout the paper we shall consider $\nRG$ with $r=\sqrt{\frac{\log n\pm O(1)}{\pi n}}$. We prove that for such a choice of $r$, 
given a fixed $\ell > 1$, the probability of having components of size exactly $\ell$
is $\Theta\left(\frac1{\log^{i-1} n}\right)$.
 Moreover, in the process of the proof we characterize the different types of components that could exist for such a value of $r$.

\section{Basic definitions and statements of results}

Given a component $\Gamma$ of $\nRG$, $\Gamma$ is {\em embeddable} \remove{if it is contained 
in some square with sides parallel to the axes of the torus and length $1-2r$.
In other words, $\Gamma$ is embeddable}
if it can be mapped into the square $[r,1-r]^2$ by a translation in the torus. 
Embeddable components do not wrap around the torus. 
\remove{Throughout the chapter and often without explicitly mentioning it, we assume in
all geometrical descriptions involving an embeddable component $\Gamma$ that $\Gamma$ is contained in $[r,1-r]^2$ and regard the torus $[0,1)^2$ as the unit square and $d(\cdot,\cdot)$ as the usual Euclidean distance.
Hence terms as ``left'', ``right'', ``above'' and ``below'' are globally defined.
On the other hand,}

\remove{\begin{figure}
\centerline{\epsfig{figure=isolated.eps,height=7cm,width=6cm}}
\caption{Embeddable components on the unit torus.}
\label{embeddable}
\end{figure}}
Components which are not embeddable must have a large size (at least $\Omega(1/r)$).
Sometimes several non-embeddable components can coexist together (see Figure~\ref{non-embeddable}). However, 
there are some non-embeddable components which are so spread around the torus, that they do not allow any 
room for other non-embeddable ones. Call these components {\em solitary}. Clearly, we can have at most one solitary component. We cannot disprove the existence of a solitary component, 
since with probability $1-o(1)$ there exists a giant component of this nature (see Corollary 2.1 of~\cite{GuptaKumar}, implicitly it is also in Theorem 13.11 of~\cite{Penrose03}). 
\begin{figure}
\centerline{\epsfig{figure=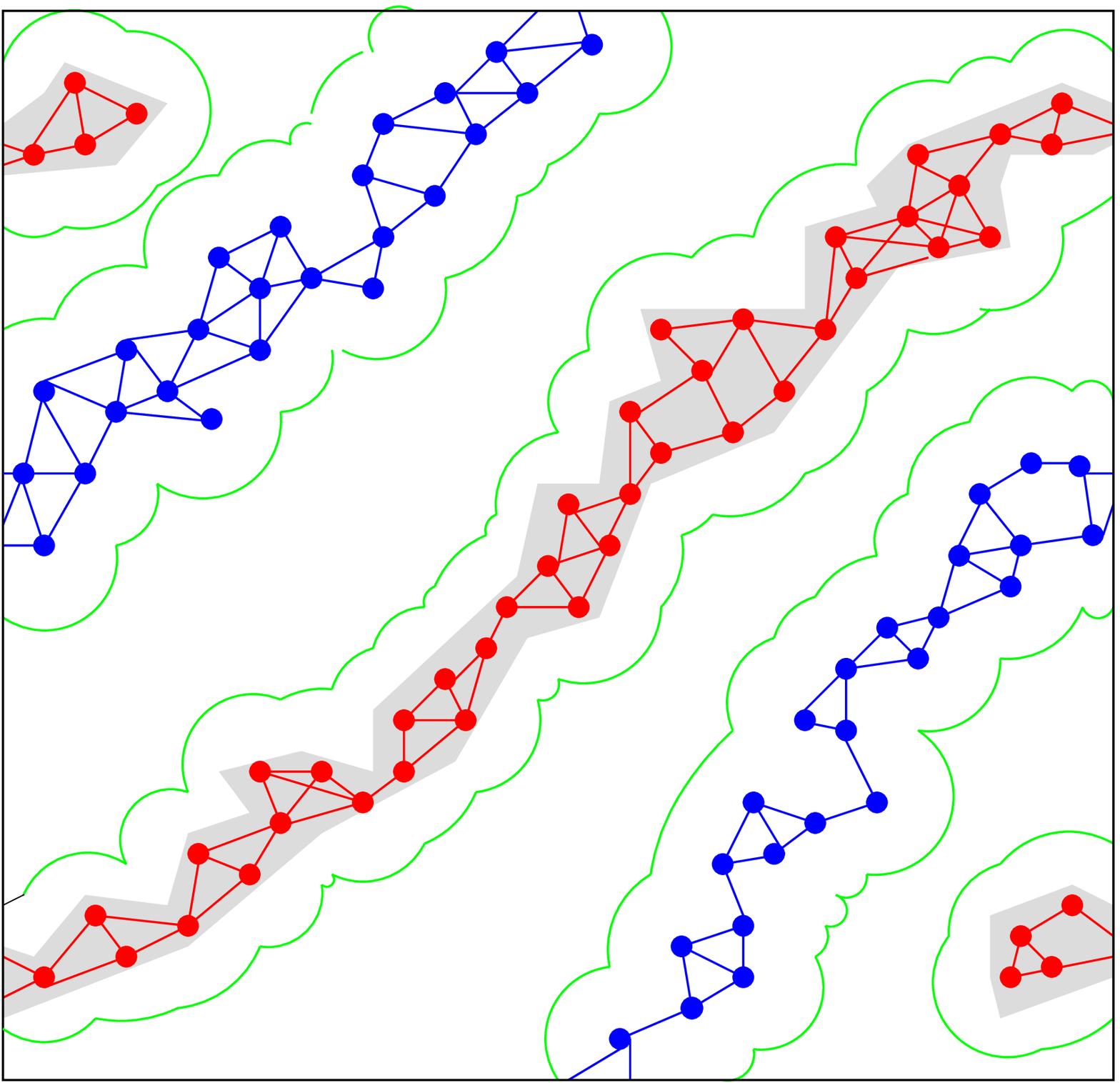,height=4cm,width=4cm}\medspace\medspace\medspace
\medspace\medspace\medspace\epsfig{figure=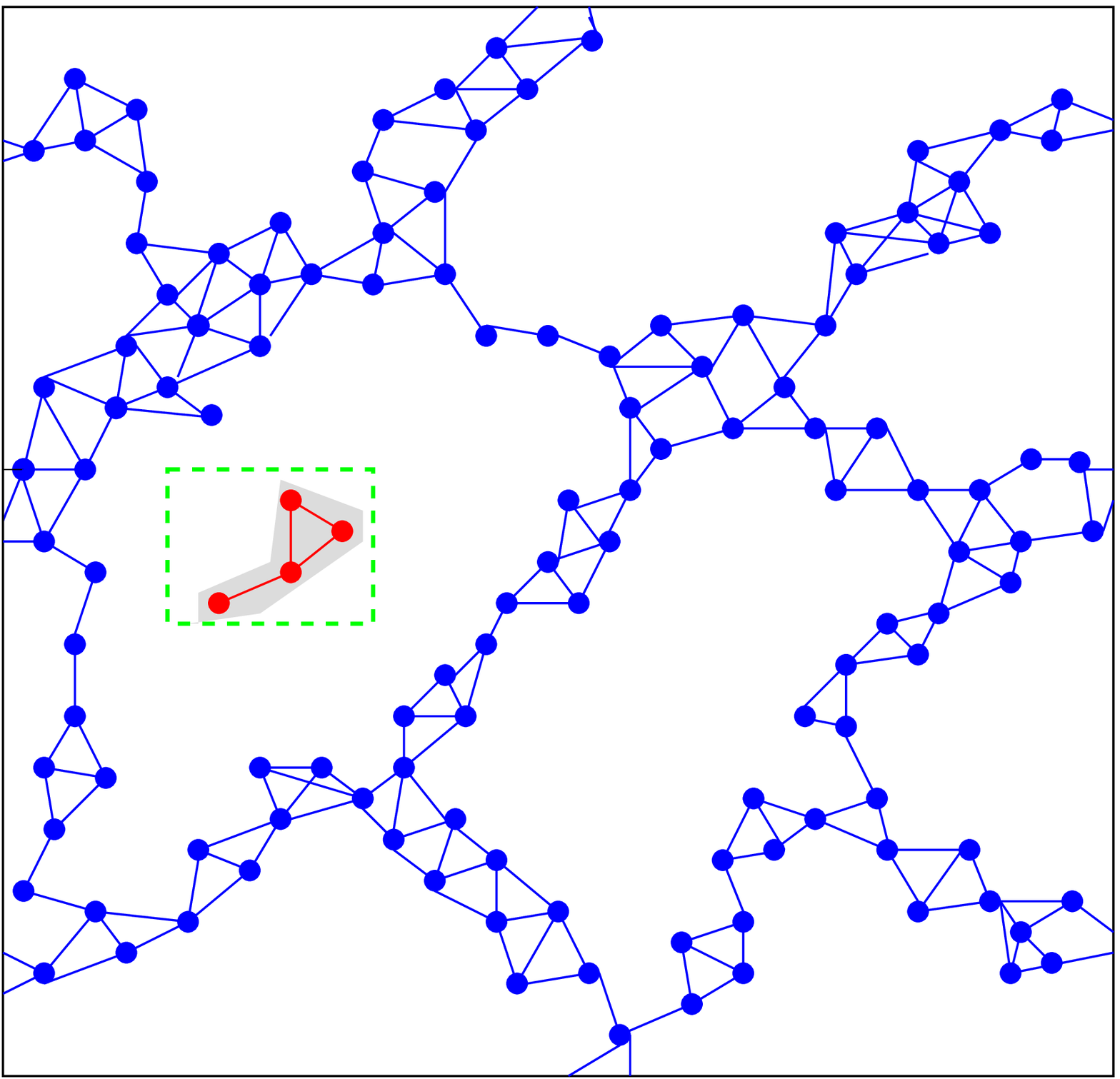,height=4cm,width=4cm}}
\caption{Non-embeddable components on the unit torus. To the left two non-embeddable and non-solitary components,
to the right a solitary non-embeddable and an embeddable component.}
\label{non-embeddable}
\end{figure}
For components which are not solitary, we 
give asymptotic bounds on the probability of their existence according to their size.

Given a fixed integer $\ell \ge 1$, let $K_\ell$ be the 
number of components in $\nRG$ of size exactly $\ell$. For large enough $n$, 
we can assume these to be embeddable, since $r=o(1)$. Moreover, for any fixed $\epsilon>0$, 
let $K'_{\epsilon,\ell}$ be the number of components of size exactly $\ell$, 
which have all their vertices at distance at most $\epsilon r$ from their leftmost one.
Finally, $\tK_\ell$ denotes the number of components of size at least $\ell$ 
and which are not solitary. In Figure~\ref{figstatic} an example of 
a component $\Gamma$ of size exactly $\ell =9$ is given, 
which has all its vertices at distance at most $\epsilon r$ from the leftmost one $u$.

\begin{figure}
\centerline{\epsfig{figure=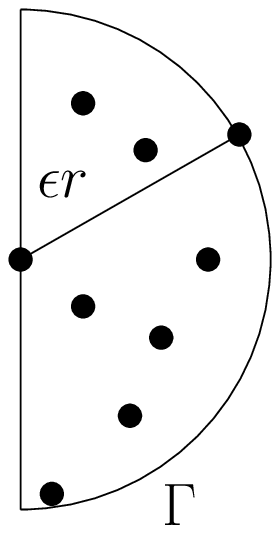,height=4cm,width=2cm}}
\caption{A component $\Gamma$ belonging to $K'_{\epsilon,9}$}
\label{figstatic}
\end{figure}

Notice that $K'_{\epsilon,\ell}\le K_\ell\le\tK_\ell$. However, 
in the following we show that asymptotically all the weight 
in the probability that $\tK_\ell>0$ comes from components which also 
contribute to $K'_{\epsilon,\ell}$ for $\epsilon$ arbitrarily small. 
This means that the more common components of size at least $\ell$ are 
cliques of size exactly $\ell$ with all their vertices close together.

We now have all definitions to state our main theorem, which is proved in Section~\ref{sec:proofmain}.
\begin{thm}\label{thm:static2}
Let  $\ell\ge2$ be a fixed integer.  Let $0<\epsilon<1/2$ be fixed. Assume that $\mu=\Theta(1)$. Then
\[
\PR{\tK_\ell>0} \sim \PR{K_\ell>0} \sim \PR{K'_{\epsilon,\ell}>0}
= \Theta\left(\frac1{\log^{\ell-1} n}\right).
\]
\end{thm}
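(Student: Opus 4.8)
\ The plan is to exploit the sandwich $\PR{K'_{\epsilon,\ell}>0}\le\PR{K_\ell>0}\le\PR{\tK_\ell>0}$ and to prove in fact that all three quantities are asymptotically equal, by showing $\PR{K'_{\epsilon,\ell}>0}\sim\EXP{K'_{\epsilon,\ell}}$ on the one hand and $\EXP{\tK_\ell}\sim\EXP{K'_{\epsilon,\ell}}=\Theta(1/\log^{\ell-1}n)$ on the other, using Markov's inequality $\PR{\tK_\ell>0}\le\EXP{\tK_\ell}$ to close the loop. Throughout I will use the hypothesis $\mu=\Theta(1)$ in the form $\pi r^2 n=\log n\pm O(1)$, so that $e^{-\pi r^2 n}=\mu/n$ and $e^{-\pi a r^2 n}=\Theta(n^{-a})$ for any fixed $a$.

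The first step is the first moment, whose heart is a geometric estimate: for a set $S\subset\UT$ of bounded cardinality and diameter $D=\dist(S)\le r$, the union of the $r$-balls around $S$ has area $A(S):=\vol{\bigcup_{x\in S}B_r(x)}=\pi r^2+\Theta(rD)$ — the lower bound by looking at the two balls realising $D$ and computing the area of the union of two disks at distance $D$, the upper bound by enclosing $S$ in a single disk of radius $r+D$. Hence a candidate component of size $\ell$ with diameter $D$ is isolated with probability $(1-A(S))^{n-\ell}=\tfrac{\mu}{n}\,n^{-\Theta(D/r)}(1+o(1))$, which is $\Theta(1/n)$ only when $D=O(r/\log n)$. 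Writing $\EXP{K_\ell}=\binom n\ell\int_{\UT^{\ell}}\mathbf{1}[\text{the }\ell\text{ points induce a component}]\,(1-A(\mathbf{x}))^{n-\ell}\,d\mathbf{x}$, fixing the leftmost point (cost $1$, by translation invariance of the torus), rescaling the remaining $\ell-1$ positions first by $r$ and then by $\log n$, and noting that at that scale the points form a clique automatically, I expect to obtain
\[
\EXP{K_\ell}=\Theta\!\left(n^{\ell}\cdot r^{2(\ell-1)}\cdot\tfrac1n\cdot\tfrac1{\log^{2(\ell-1)}n}\right)=\Theta\!\left((nr^2)^{\ell-1}\log^{-2(\ell-1)}n\right)=\Theta\!\left(\tfrac1{\log^{\ell-1}n}\right),
\]
the constant coming from the convergent integral $\int_{(\real^2)^{\ell-1}}e^{-\Theta(\dist(\{0,v_2,\dots,v_\ell\}))}\,dv$. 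The same computation under the tightness constraint gives $\EXP{K'_{\epsilon,\ell}}=\Theta(1/\log^{\ell-1}n)$, and comparing the two: a size-$\ell$ component not counted by $K'_{\epsilon,\ell}$ has diameter $>\epsilon r$, hence isolation probability $O(n^{-1-c\epsilon})$ for a fixed $c>0$, so $\EXP{K_\ell-K'_{\epsilon,\ell}}=O(n^{-c\epsilon}\log^{\ell-1}n)=o(1/\log^{\ell-1}n)$ and thus $\EXP{K'_{\epsilon,\ell}}\sim\EXP{K_\ell}$.

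For the lower bound I would run a second-moment (Bonferroni) argument on $K'_{\epsilon,\ell}$: since $\PR{K'_{\epsilon,\ell}>0}\ge\EXP{K'_{\epsilon,\ell}}-\tfrac12\EXP{K'_{\epsilon,\ell}(K'_{\epsilon,\ell}-1)}$, it suffices to check $\EXP{K'_{\epsilon,\ell}(K'_{\epsilon,\ell}-1)}=o(\EXP{K'_{\epsilon,\ell}})$. Ordered pairs of disjoint $\ell$-sets whose tight clusters lie $\Omega(r)$ apart contribute $(1+o(1))\EXP{K'_{\epsilon,\ell}}^2$, which is $o(\EXP{K'_{\epsilon,\ell}})$ because $\EXP{K'_{\epsilon,\ell}}\to0$. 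For clusters lying within $O(\ell r)$ of each other, the fact that they are distinct components forces some cross-pair of points at distance $>r$, so the combined union of $r$-balls has area exceeding $\pi r^2$ by a constant factor, i.e.\ joint isolation probability $O(n^{-1-\delta})$ for a fixed $\delta>0$, whereas conditioning on the first component boosts this by at most $n^{\delta'}$ with $\delta'<\delta$ (the overlap of two $r$-balls at distance $\ge r$ being only a constant fraction of $\pi r^2$); consequently the conditional expected number of further tight $\ell$-components near a given one is $O(r^2 n^{\delta'})\,\EXP{K'_{\epsilon,\ell}}=o(1)$, so the close pairs contribute $o(\EXP{K'_{\epsilon,\ell}})$. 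Hence $\EXP{K'_{\epsilon,\ell}(K'_{\epsilon,\ell}-1)}=o(\EXP{K'_{\epsilon,\ell}})$, giving $\PR{K'_{\epsilon,\ell}>0}\ge(1-o(1))\EXP{K'_{\epsilon,\ell}}$, which together with the Markov bound $\PR{K'_{\epsilon,\ell}>0}\le\EXP{K'_{\epsilon,\ell}}$ yields $\PR{K'_{\epsilon,\ell}>0}\sim\EXP{K'_{\epsilon,\ell}}$.

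It then remains to bound the excess $\EXP{\tK_\ell-K'_{\epsilon,\ell}}$, which counts non-solitary components of size $\ge\ell$ that are not tight $\ell$-cliques, split into: (a) non-tight size-$\ell$ components, of total expectation $o(1/\log^{\ell-1}n)$ by the first step; (b) components of size $k$ with $\ell<k\le K_0$ for a large fixed constant $K_0$, for which the first-moment computation for each fixed $k$ gives $\sum_k\EXP{K_k}=O(1/\log^{\ell}n)=o(1/\log^{\ell-1}n)$; and (c) non-solitary components of size $>K_0$. I expect part (c) to be the main obstacle: a naive term-by-term first moment over $k$ fails here — the bound degrades as $k$ grows, because it does not register that a genuinely large component must create an atypical geometric "bottleneck" — so one must instead invoke the classification of the possible shapes of components of \nRG\ near the connectivity threshold (large embeddable components force a union of $r$-balls of area $(1+\Omega(1))\pi r^2$, or else are dense clumps of negligible combinatorial weight; coexisting non-embeddable components are ruled out by a crossing/percolation argument in the spirit of~\cite{Penrose03}) to show that the probability of any such component — a bound which also absorbs the $o(1)$ chance that the giant component itself fails to be solitary — is $o(1/\log^{\ell-1}n)$. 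Granting this, $\EXP{\tK_\ell}\sim\EXP{K'_{\epsilon,\ell}}$, and combining with the previous two steps and the sandwich $\PR{K'_{\epsilon,\ell}>0}\le\PR{K_\ell>0}\le\PR{\tK_\ell>0}\le\EXP{\tK_\ell}$ gives $\PR{\tK_\ell>0}\sim\PR{K_\ell>0}\sim\PR{K'_{\epsilon,\ell}>0}=\Theta(1/\log^{\ell-1}n)$. In short, the first- and second-moment bookkeeping (Steps one–three), resting on the geometric lemma $A(S)=\pi r^2+\Theta(r\,\dist(S))$, is comparatively routine; the real work is the geometric exclusion of large and non-embeddable non-solitary components with the required quantitative precision.
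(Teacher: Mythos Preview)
Your outline is essentially the paper's: three lemmata --- $\EXP{K'_{\epsilon,\ell}}=\Theta(1/\log^{\ell-1}n)$, $\EXP{[K'_{\epsilon,\ell}]_2}=O(1/\log^{2\ell-2}n)$, and a bound on the excess $\tK_\ell-K'_{\epsilon,\ell}$ --- combined via Bonferroni and the sandwich $K'_{\epsilon,\ell}\le K_\ell\le\tK_\ell$. Your geometric estimate $A(S)=\pi r^2+\Theta(r\,\dist(S))$ is exactly what drives the paper's first-moment computation, and your second-moment split (far pairs factor, near pairs are polynomially negligible) is the paper's as well.

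There is one bookkeeping slip. You close via Markov, $\PR{\tK_\ell>0}\le\EXP{\tK_\ell}$, which requires an \emph{expectation} bound on the excess. But the geometric arguments you sketch for large-diameter or non-embeddable components --- find a connected empty cell-region $\cS^*$ of area $(1+\Omega(1))\pi r^2$ and union-bound over its $\Theta(n/\log n)$ possible shapes --- naturally control $\PR{M>0}$, not $\EXP{M}$; your sentence ``the probability of any such component \ldots\ is $o(1/\log^{\ell-1}n)$. Granting this, $\EXP{\tK_\ell}\sim\EXP{K'_{\epsilon,\ell}}$'' is a non sequitur without showing that distinct components yield distinct empty regions. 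The paper sidesteps this entirely by bounding $\PR{\tK_\ell-K'_{\epsilon,\ell}>0}=O(1/\log^\ell n)$ and using $\PR{\tK_\ell>0}\le\PR{K'_{\epsilon,\ell}>0}+\PR{\tK_\ell-K'_{\epsilon,\ell}>0}$ instead of Markov --- a one-line fix to your write-up.

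A second, smaller point: your size-only decomposition (b)~$\ell<k\le K_0$ for a fixed $K_0$, (c)~$k>K_0$, leaves the range of moderate sizes $K_0<k\lesssim\log n$ with \emph{small} diameter not clearly covered --- neither your fixed-$k$ first moment in (b) nor the large-area geometric argument in (c) applies there. The paper's case split is by diameter first: it runs a uniform first-moment bound over all $\ell+1\le k\le\frac1{37}\log n$ with diameter $\le\epsilon r$ (getting a geometrically decaying sum), uses Chernoff on small boxes for $k>\frac1{37}\log n$ with diameter $\le\epsilon r$, and reserves the empty-region argument for diameter $>\epsilon r$ (embeddable) and for non-embeddable non-solitary components. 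None of this changes your overall strategy; it just sharpens the partition of the excess.
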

Given a random set $\cX$ of $n$ points in \UT, let $\nRGR$ be the continuous random graph process describing the evolution of $\nRG$ for $r$ between $0$ and $+\infty$ ($\cX$ remains fixed for the whole process).
Observe that the graph process starts at $r=0$ with all $n$ vertices being isolated, then edges are progressively added, and finally at $r\ge\sqrt2/2$ we have the complete graph on $n$ vertices.
In this context, consider the random variables $r_c=r_c(n)=\inf \{r\in\real^+: \nRG \mbox{ is connected}\}$ and $r_i=r_i(n)=\inf\{r\in\real^+: \nRG \mbox { has no isolated vertex}\}$. 

As a corollary of Theorem~\ref{thm:static2} we obtain an alternative proof of the following well known result (see Theorem 1 of~\cite{Penrose97}):
intuitively speaking, we show that a.a.s.\ $\nRGR$ becomes connected exactly at the same moment when the last isolated vertex disappears. Note that this is stronger than the results stated in the introduction, which just say that the properties of connectivity and having no isolated vertex have a sharp threshold with the same asymptotic characterization (see Proposition~\ref{prop:wellknown}).
\remove{OLD:
 As a corollary of Theorem~\ref{thm:static2} we obtain an alternative proof of the following well known result (see Theorem 1 of~\cite{Penrose97}):
intuitively speaking, we (re)prove that a.a.s.\ exactly at the moment when the last isolated vertex disappears $\nRG$ gets connected. Note that this is stronger than the results stated in the introduction, which just say that the properties of connectivity and having no isolated vertex have the same sharp threshold.
Denote by $r_c=r_c(n)=\inf \{r: \nRG \mbox{ is connected}\}$ and by 
$r_i=r_i(n)=\inf\{r: \nRG \mbox { has no isolated vertex}\}$. 
}
\begin{cor}\label{cor:hitting}
With probability $1-o(1)$, we have $r_c=r_i$.
\end{cor}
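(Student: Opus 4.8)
The plan is to derive $r_c = r_i$ from Theorem~\ref{thm:static2} by a sandwiching argument on the graph process $\nRGR$. First observe that $r_i \le r_c$ always holds deterministically: if $\nRG$ has an isolated vertex it cannot be connected, so the first radius at which connectivity occurs is at least the first radius at which the last isolated vertex vanishes. It therefore suffices to show that a.a.s.\ $r_c \le r_i$, i.e.\ that a.a.s.\ at the moment $r = r_i$ the graph is already connected. The only obstruction to this is the existence, at $r = r_i$, of a component of size between $2$ and $n/2$ (by Proposition~\ref{prop:wellknown}, once there are no isolated vertices the structure is a giant component plus possibly a few small ones; connectivity fails precisely when some non-giant component of size $\ge 2$ is present). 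So I want to rule out, a.a.s., the event that $\nRG[r_i]$ has a component of size in $\{2,\dots,\lfloor n/2\rfloor\}$.

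The key step is to localise $r_i$ within a narrow window and then apply Theorem~\ref{thm:static2} across that window. Since the property ``no isolated vertices'' has a sharp threshold at $\sqrt{\log n/(\pi n)}$ with the finer description given by $\mu$, one shows by a standard first/second moment computation on $K_1$ (the bound \eqref{eq:EK1} and a variance estimate) that for any $\omega = \omega(n)\to\infty$ slowly, a.a.s.\ $r_i$ lies in the interval $I = \left[\sqrt{\frac{\log n - \omega}{\pi n}},\ \sqrt{\frac{\log n + \omega}{\pi n}}\right]$; indeed below the lower end there are (a.a.s.) many isolated vertices and above the upper end there are (a.a.s.) none. Now discretise $I$ into finitely many (say $\lceil\omega^2\rceil$) radii $r^{(0)} < r^{(1)} < \dots < r^{(M)}$ with $r^{(0)},r^{(M)}$ the endpoints of $I$; each $r^{(j)}$ satisfies $\mu(r^{(j)}) = n e^{-\pi (r^{(j)})^2 n} = \Theta(1)$ after possibly shrinking to a sub-window, so Theorem~\ref{thm:static2} applies at each $r^{(j)}$ and gives $\PR{\tK_2(r^{(j)}) > 0} = \Theta(1/\log n) = o(1)$. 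Since having a component of size $\ge 2$ that is not the giant one is, for $r$ in this range, the same as $\tK_2 > 0$ once isolated vertices are excluded, a union bound over the $M = o(\log n)$ values $r^{(j)}$ shows that a.a.s.\ none of the graphs $\nRG[r^{(j)}]$ has such a component.

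Finally I pass from the grid to the whole window by monotonicity. A component of size $\ell$ with $2\le\ell$ can only be created (as $r$ increases) by merging smaller pieces, and can only be destroyed by merging into something larger; the relevant point is that if $\nRG[r_i]$ had a non-giant component of size $\ell\in\{2,\dots,\lfloor n/2\rfloor\}$, then at the grid point $r^{(j)}$ immediately below $r_i$ the graph $\nRG[r^{(j)}]$ would contain a non-solitary component of size $\ge 2$ that is not the giant one — because over the tiny sub-interval $[r^{(j)}, r_i]$ of length $O(\omega/(\sqrt{n}\log n))$ one can check, using that all pairwise distances are concentrated, that no size-$\ell$ non-giant component present at $r_i$ can have been entirely absent (as a ``$\ge 2$, non-giant'' structure) at $r^{(j)}$. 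Hence the grid event we ruled out would have occurred, a contradiction. Combining, a.a.s.\ $\nRG[r_i]$ is connected, so $r_c \le r_i$, and with the trivial reverse inequality we conclude $r_c = r_i$ a.a.s.

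The main obstacle is the last paragraph: making the passage from the finite grid to the continuous process rigorous requires controlling what can happen to components in the sub-intervals between consecutive grid radii. The clean way to handle this is to choose the grid fine enough (e.g.\ $M$ growing, but $M = o(\log n)$ so the union bound survives) that within each sub-interval at most one new edge appears a.a.s., or alternatively to note that the events ``$\tK_\ell(r) > 0$ for some $r$ in a given sub-interval'' are themselves monotone-ish and can be bounded by the value at the right endpoint via the same Theorem~\ref{thm:static2} estimate applied with a slightly enlarged window; I expect the edge-counting version to be the least painful.
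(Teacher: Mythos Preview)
Your approach has a genuine gap in the grid-to-continuous step, and the fix you suggest does not work. You want $M=o(\log n)$ grid points so that the union bound $M\cdot\Theta(1/\log n)=o(1)$ survives. But the expected number of pairs $(X_i,X_j)$ with $d(X_i,X_j)\in[r^{(0)},r^{(M)}]$ is $\Theta(n^2)\cdot\pi\big((r^{(M)})^2-(r^{(0)})^2\big)=\Theta(n\omega)$, so to guarantee at most one new edge per sub-interval you would need $M=\Omega(n)$ grid points, destroying the union bound. The ``monotone-ish'' alternative fails too: the event $\{\tK_2(r)>0\}$ is genuinely non-monotone in $r$ (two isolated vertices can merge into a size-$2$ component and later be absorbed), so you cannot bound it on a sub-interval by its value at an endpoint. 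There is also a smaller issue: with $\omega\to\infty$ the endpoints of $I$ have $\mu\to\infty$ and $\mu\to 0$, so Theorem~\ref{thm:static2} does not apply there; your ``shrinking to a sub-window'' remark does not resolve this.

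The paper avoids discretisation entirely. It fixes $\epsilon>0$ and a \emph{constant} $\kappa=\kappa(\epsilon)$, sets $r_\ell=\sqrt{(\log n-\kappa)/(\pi n)}$ and $r_u=\sqrt{(\log n+\kappa)/(\pi n)}$, and uses the Poisson limit for $K_1$ to get $\pr(r_i\in[r_\ell,r_u])\ge 1-\epsilon$ (not a.a.s., but $\epsilon$ is arbitrary at the end). Theorem~\ref{thm:static2} is applied \emph{only} at $r_\ell$ and $r_u$, to say both graphs consist of a giant component plus isolated vertices. The whole interval $[r_\ell,r_u]$ is then controlled in one stroke by a direct first-moment bound: the expected number of pairs of vertices that are both isolated in $G(\cX;r_\ell)$ \emph{and} lie within distance $r_u$ of each other is $O(n^{-1/2})$, using $\ar{\cS}\ge\tfrac32\pi r_\ell^2$ and $\pi(r_u^2-r_\ell^2)=\Theta(1/n)$. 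So a.a.s.\ no two isolated vertices at $r_\ell$ can merge with each other anywhere in $[r_\ell,r_u]$; each must join the giant directly, giving $r_c=r_i$. This replaces your problematic grid argument with a single calculation.
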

The proof of Corollary~\ref{cor:hitting} is given in Section~\ref{sec:corollary}.

\section{Proof of Theorem~\ref{thm:static2}}\label{sec:proofmain}
We state and prove three lemmata from which Theorem~\ref{thm:static2} will follow easily.
\begin{lem}\label{lem:EZei}
Let  $\ell\ge2$ be a fixed integer, and $0<\epsilon<1/2$ be also fixed. 
Assume that $\mu=\Theta(1)$. Then,
\[
\EXP{K'_{\epsilon,\ell}}  = \Theta(1/\log^{\ell-1} n).
\]
\end{lem}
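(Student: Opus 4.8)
The plan is to compute $\EXP{K'_{\epsilon,\ell}}$ directly via the first-moment method. Fix a labelled set of $\ell$ vertices, say $\{1,\dots,\ell\}$; by linearity of expectation and symmetry, $\EXP{K'_{\epsilon,\ell}} = \binom{n}{\ell}\,\PR{A}$, where $A$ is the event that $X_1$ is the leftmost of $X_1,\dots,X_\ell$, that all of $X_2,\dots,X_\ell$ lie within distance $\epsilon r$ of $X_1$, that the induced graph on these $\ell$ points is connected, and that no other vertex $X_j$ ($j>\ell$) lies within distance $r$ of any of $X_1,\dots,X_\ell$. I would condition on the position of $X_1$ (which is irrelevant by translation invariance on the torus) and on the relative positions $X_2-X_1,\dots,X_\ell-X_1$. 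The key point is that once we require all $\ell$ points to lie in a disc of radius $\epsilon r$ with $\epsilon<1/2$, any two of them are at distance $<2\epsilon r< r$, so the induced graph is automatically the complete graph $K_\ell$ — hence the connectivity constraint is free, and the ``cluster shape'' integral is just $(\pi r^2)^{\ell-1}$ times a constant $c(\epsilon,\ell)=\Theta(1)$ (the probability, for $\ell-1$ uniform points in the torus, of all landing in a fixed $\epsilon r$-disc around $X_1$ and of $X_1$ being leftmost, suitably normalised). So the combinatorial/geometric factor contributes $\Theta(n^\ell r^{2(\ell-1)})$.

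Next I would handle the ``isolation'' factor: given that all $\ell$ points lie in an $\epsilon r$-disc $D$ around $X_1$, the forbidden region for each of the remaining $n-\ell$ vertices is the union of the $r$-discs around $X_1,\dots,X_\ell$, which is contained in the $r(1+\epsilon)$-disc around $X_1$ and contains the $r$-disc around $X_1$; its area is therefore $\pi r^2(1+O(\epsilon))$, and more precisely, since $\ell$ is fixed and the points are within $\epsilon r$, the area is $\pi r^2(1+O(\epsilon))$ uniformly. Thus the probability that all other $n-\ell$ vertices avoid this region is $(1-\Theta(r^2))^{n-\ell} = e^{-\Theta(r^2 n)}$. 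Combining, $\EXP{K'_{\epsilon,\ell}} = \Theta\!\left(n^\ell r^{2(\ell-1)} e^{-\pi r^2 n(1+O(\epsilon))}\right)$.

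Finally I would substitute the hypothesis $\mu=\Theta(1)$, i.e.\ $n e^{-\pi r^2 n}=\Theta(1)$, equivalently $\pi r^2 n = \log n + \Theta(1)$, so $r^2 = \Theta(\log n / n)$ and $e^{-\pi r^2 n} = \Theta(1/n)$. Then
\[
n^\ell r^{2(\ell-1)} e^{-\pi r^2 n} = \Theta\!\left( n^\ell \cdot \frac{\log^{\ell-1} n}{n^{\ell-1}} \cdot \frac1n \right) = \Theta(\log^{\ell-1} n),
\]
which is off by a factor $\log^{2(\ell-1)}n$ from the target — so this tells me the bookkeeping above must be sharper: the correct geometric factor cannot be $(\pi r^2)^{\ell-1}$ but rather something of order $r^{2(\ell-1)}/\log^{\,2(\ell-1)}n$ worth of extra suppression is absent, meaning I have miscounted. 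The resolution is that $r$ itself is $\Theta(\sqrt{\log n/n})$ so $r^{2(\ell-1)} = \Theta((\log n/n)^{\ell-1})$, giving $n^\ell r^{2(\ell-1)}e^{-\pi r^2 n} = \Theta(\log^{\ell-1}n)$; to land on $\Theta(1/\log^{\ell-1}n)$ the honest computation must produce an extra $1/\log^{2(\ell-1)}n$, which I expect comes from being careful that the ``leftmost vertex'' and clustering constraints, together with the precise exponent in $e^{-\pi r^2 n - O(r^4 n)}$ and the fact that the excluded area exceeds $\pi r^2$ by a term that is $\Theta(r^2)$ but whose interaction with $n$ is $\Theta(r^2 n)=\Theta(\log n)$ in the exponent, contribute polylog corrections. \textbf{This exponent bookkeeping is the main obstacle}: I must track the $O(r^4 n)$ and the $(1+O(\epsilon))$ correction to the excluded area at the level of $\log n$ in the exponent (not merely $O(1)$), and show the net effect is a clean $\Theta(1/\log^{\ell-1}n)$ uniformly in the allowed range of $r$, with constants depending on $\epsilon$ and $\ell$ only. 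The upper and lower bounds on $\PR{A}$ are obtained by bounding the excluded region between the $r$-disc and the $r(1+\epsilon)$-disc around $X_1$ respectively, and checking both yield the same order.
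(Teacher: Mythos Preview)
Your setup is right, and you correctly diagnose that your first pass is off by exactly $\log^{2(\ell-1)}n$. But the fix you propose at the end --- bounding the excluded region between the $r$-disc and the $(1+\epsilon)r$-disc around $X_1$ --- does \emph{not} close the gap. With excluded area $\pi r^2$ you get $e^{-\pi r^2 n}=\Theta(1/n)$ and hence $\EXP{K'_{\epsilon,\ell}}=O(\log^{\ell-1}n)$; with excluded area $\pi(1+\epsilon)^2r^2$ you get $e^{-\pi r^2 n(1+\epsilon)^2}=\Theta(n^{-(1+\epsilon)^2})$ and hence $\EXP{K'_{\epsilon,\ell}}=\Omega(n^{-c\epsilon}\log^{\ell-1}n)$. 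These two bounds are a factor $n^{c\epsilon}$ apart, and neither is $\Theta(1/\log^{\ell-1}n)$. Any uniform (i.e.\ configuration-independent) bound on the excluded area will fail for exactly this reason: a constant multiplicative error in the area becomes a polynomial error in~$n$ after exponentiation.

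The missing idea is to make the excluded area depend on the \emph{diameter} $\rho$ of the cluster and then integrate over~$\rho$. The paper distinguishes, besides the leftmost vertex $X_i$, the vertex $X_j$ farthest from it, sets $\rho=d(X_i,X_j)$, and shows geometrically that the excluded area satisfies
\[
\pi r^2\Bigl(1+\tfrac{1}{6}\,\tfrac{\rho}{r}\Bigr)\ <\ \ar{\cS}\ <\ \pi r^2\Bigl(1+\tfrac{5}{2}\,\tfrac{\rho}{r}\Bigr).
\]
The point is that the correction is linear in $\rho/r$, so the isolation probability is $(\mu/n)^{1+\Theta(\rho/r)}$. Meanwhile the density for ``$X_j$ at distance $\rho$, and the other $\ell-2$ vertices inside the half-disc of radius $\rho$'' is $\Theta(\rho^{2\ell-3})\,d\rho$. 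One is led to
\[
\int_0^{\epsilon r}\rho^{2\ell-3}\,n^{-\beta\rho/r}\,d\rho
\;=\;r^{2\ell-2}\int_0^{\epsilon}x^{2\ell-3}e^{-\beta x\log n}\,dx
\;=\;\Theta\!\left(\frac{r^{2\ell-2}}{\log^{2\ell-2}n}\right),
\]
a Laplace-type integral whose effective range is $x=O(1/\log n)$. This extra $1/\log^{2\ell-2}n$ is precisely the factor you were missing; combined with $n^\ell\cdot r^{2\ell-2}\cdot(1/n)=\Theta(\log^{\ell-1}n)$ it yields $\Theta(1/\log^{\ell-1}n)$. In short: treat the cluster diameter as the variable of integration, not as a parameter to be crudely bounded.
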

\begin{proof}
First observe that with probability $1$, for each component $\Gamma$ which contributes to $K'_{\epsilon,\ell}$, $\Gamma$ has a unique leftmost vertex $X_i$ and the vertex $X_j$ in $\Gamma$ at greatest distance from $X_i$ is also unique. Hence, we can restrict our attention to this case.

Fix an arbitrary set of indices $J\subset\nset$ of size $|J|=\ell$, with two distinguished elements $i$ and $j$.
Denote by $\cY = \bigcup_{k\in J} X_k$ the set of random points in $\cX$ with indices in $J$.
Let $\cE$ be the following event: 
All vertices in $\cY$ are at distance at most $\epsilon r$ from $X_i$ and to the right of $X_i$; vertex $X_j$ is the one in $\cY$ with greatest distance from $X_i$; and the vertices of $\cY$ form a component $\Gamma$ of $\nRG$.
If $\pr(\cE)$ is multiplied by the number of possible choices of $i$, $j$ and the remaining $\ell-2$ elements of $J$, we get
\begin{equation}\label{eq:EZei}
\ex K'_{\epsilon,\ell} = n(n-1)\binom{n-2}{\ell-2} \pr(\cE).
\end{equation}

In order to bound the probability of $\cE$ we need some definitions.
Let $\rho=d(X_i,X_j)$ and let $\cS$ be the set of all points in the torus $[0,1)^2$ 
which are at distance at most $r$ from some vertex in $\cY$ (see Figure~\ref{figstatic1}).
Notice that $\rho$ and $\cS$ depend on the set of random points $\cY$.

\begin{figure}
\centerline{\epsfig{figure=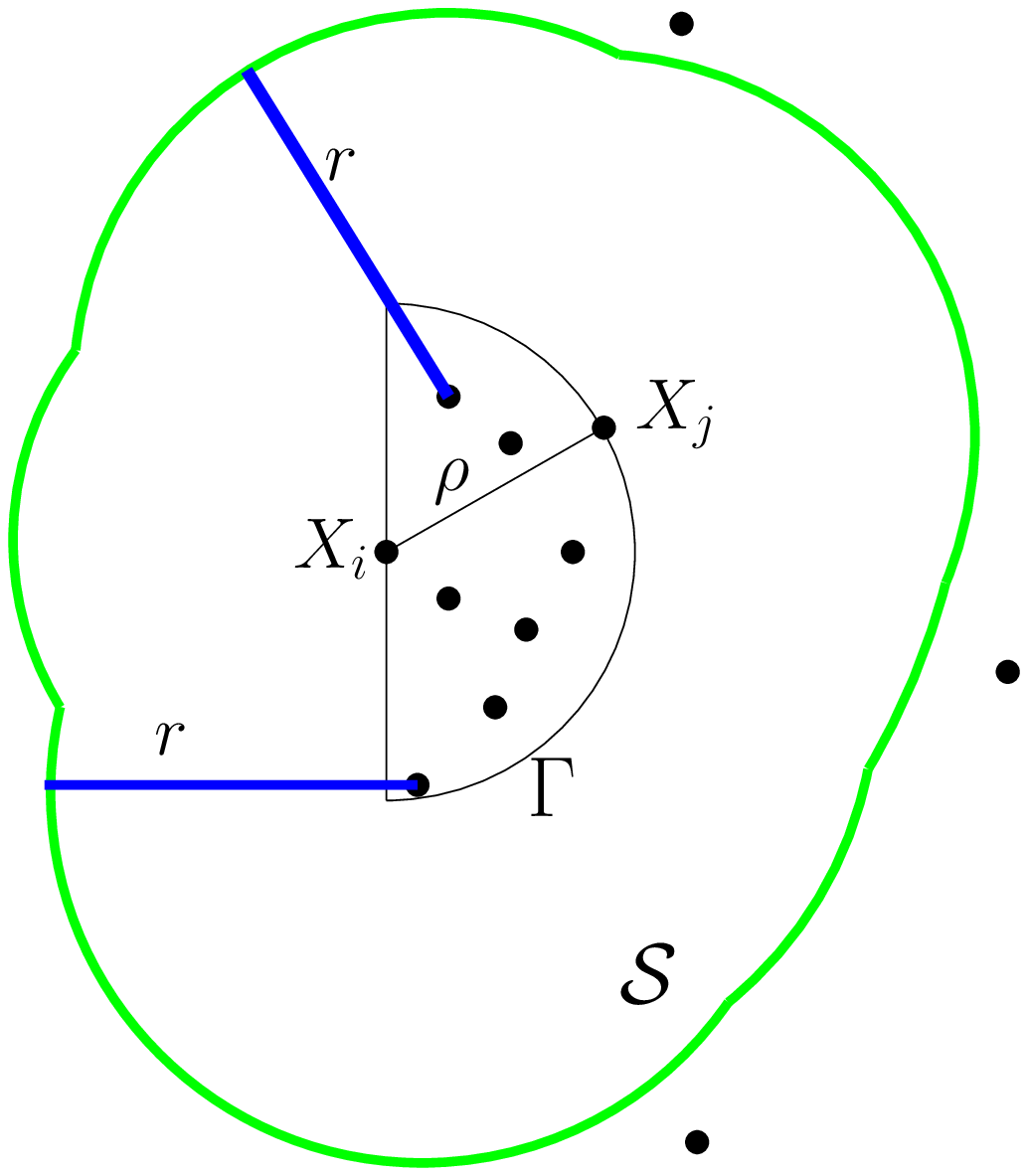,height=4cm,width=4cm}}
\caption{The set $\cS$ for the component $\Gamma$ of Figure~\ref{figstatic}}
\label{figstatic1}
\end{figure}

We first need bounds of $\ar{\cS}$ in terms of $\rho$.
Observe that $\cS$ is contained in the circle of radius $r+\rho$ and center $X_i$, and thus
\begin{equation}\label{eq:Subound}
\ar{\cS}\le\pi(r+\rho)^2.
\end{equation}
Let $i_{\mathsf L}=i$, $i_{\mathsf R}$, $i_{\mathsf T}$ and $i_{\mathsf B}$ be respectively the indices of the leftmost, rightmost, topmost and  bottommost vertices in $\cY$ (some of these indices possibly equal).
Assume w.l.o.g. that the vertical length of $\cY$ (i.e.\ the vertical distance between $X_{i_{\mathsf T}}$ and $X_{i_{\mathsf B}}$) is at least $\rho/\sqrt2$. Otherwise, the horizontal length of $\cY$ has this property and we can rotate the descriptions in the argument. The upper halfcircle with center $X_{i_{\mathsf T}}$ and the lower halfcircle with center $X_{i_{\mathsf B}}$ are disjoint and are contained in $\cS$. 
If $X_{i_{\mathsf R}}$ is at greater vertical distance from $X_{i_{\mathsf T}}$ than from $X_{i_{\mathsf B}}$, then consider the rectangle of height $\rho/(2\sqrt2)$ and width $r-\rho/(2\sqrt2)$ with one corner on $X_{i_{\mathsf R}}$ and above and to the right of $X_{i_{\mathsf R}}$. Otherwise, consider the same rectangle below and to the right of $X_{i_{\mathsf R}}$.
This rectangle is also contained in $\cS$ and its interior does not intersect the previously described halfcircles. Analogously, we can find another rectangle of height $\rho/(2\sqrt2)$ and width $r-\rho/(2\sqrt2)$ to the left of $X_{i_{\mathsf L}}$ and either above or below $X_{i_{\mathsf L}}$ with the same properties. Hence,
\begin{equation}\label{eq:Slbound}
\ar{\cS}\ge \pi r^2 + 2\left(\frac\rho{2\sqrt2}\right)\left(r-\frac\rho{2\sqrt2}\right).
\end{equation}
{From}~\eqref{eq:Subound},~\eqref{eq:Slbound} and the fact that $\rho<r/2$, we can write
\begin{equation}\label{eq:Sbound}
\pi r^2 \left(1 + \frac{1}{6} \frac\rho{r} \right) < \ar{\cS} < \pi r^2 \left(1+\frac52\frac\rho r\right) < \frac{9\pi}{4} r^2.
\end{equation}
Now consider the probability $P$ that the $n-\ell$ vertices not in $\cY$ lie outside $\cS$. Clearly $P=(1-\ar{\cS})^{n-\ell}$. Moreover, by~\eqref{eq:Sbound} and using the fact that $e^{-x-x^2} \le 1-x \le e^{-x}$ for all $x\in[0,1/2]$, we obtain
\[
e^{-(1+5\rho/(2 r))\pi r^2n - (9\pi r^2/4)^2n} < P <   \frac{e^{-(1+\rho/(6r))\pi r^2n}}{(1-9\pi r^2/4)^\ell},
\]
and after plugging in the definition of $\mu$ (recall that $\mu=ne^{-r^2 \pi n}$) we have
\begin{equation} \label{eq:Pbound}
\left(\frac\mu n\right)^{1+5\rho/(2 r)} e^{-(9\pi r^2/4)^2n} < P <  \left(\frac\mu n\right)^{1+\rho/(6r)}    \frac{1}{(1-9\pi r^2/4)^\ell}.
\end{equation}
\remove{\begin{figure}
\centerline{\epsfig{figure=figstatic3.eps,height=4cm,width=4cm}\medspace\medspace\medspace\medspace \epsfig{figure=figstatic2.eps,height=4cm,width=4cm}}
\caption{The upper and lower bounds to $\ar{\cS}$ in the proof of Lemma~1.}
\label{ub}
\end{figure}}

Event $\cE$ can also be described as follows: There is some non-negative real $\rho\le\epsilon r$ such that $X_j$ is placed at distance $\rho$ from $X_i$ and to the right of $X_i$; all the remaining vertices in $\cY$ are inside the halfcircle of center $X_i$ and radius $\rho$;
and the $n-\ell$ vertices not in $\cY$ lie outside $\cS$.
Hence, $\pr(\cE)$ can be  bounded from above (below) by integrating with respect to $\rho$ the probability density function of $d(X_i,X_j)$ times the probability that the remaining $\ell-2$ selected vertices lie inside the right halfcircle of center $X_i$ and radius $\rho$ times the upper (lower) bound on $P$ we obtained in~\eqref{eq:Pbound}:
\begin{equation}\label{eq:PE}
\Theta(1)\, I(5/2)\le \pr(\cE) \le \Theta(1)\, I(1/6),
\end{equation}
where
\begin{align}
I(\beta) &= \int_0^{\epsilon r} \pi\rho \left(\frac\pi2\rho^2\right)^{\ell-2} \frac{1}{n^{1 + \beta\rho/r}}  \,d\rho
\notag\\
&=  \frac2n \left(\frac\pi2 r^2\right)^{\ell-1}  \int_0^{\epsilon} x^{2\ell-3} n^{-\beta x}  \,dx
\label{eq:Ibeta}
\end{align}
Since $\ell$ is fixed, for $\beta=5/2$ or $\beta=1/6$,
\begin{align}
I(\beta) &= \Theta\left(\frac{\log^{\ell-1}n}{n^{\ell}}\right)  \int_0^{\epsilon} x^{2\ell-3} n^{-\beta x}  \,dx
\notag\\
&= \Theta\left(\frac{\log^{\ell-1}n}{n^{\ell}}\right)  \frac{(2\ell-3)!}{(\beta\log n)^{2\ell-2}}
\notag\\
&= \Theta\left(\frac{1}{n^{\ell}\log^{\ell-1}n}\right)
\label{eq:Ibeta2}.
\end{align}
The statement follows from~\eqref{eq:EZei}, \eqref{eq:PE} and~\eqref{eq:Ibeta2}.
\end{proof}

\begin{lem}\label{lem:PY}
Let  $\ell\ge2$ be a fixed integer. Let $\epsilon>0$ be also fixed. Assume that $\mu=\Theta(1)$. 
Then
\[
\PR{\tK_\ell-K'_{\epsilon,\ell}>0} = O(1/\log^{\ell} n).
\]
\end{lem}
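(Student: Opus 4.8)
The plan is to bound the probability that there exists a non-solitary component of size at least $\ell$ which is \emph{not} counted by $K'_{\epsilon,\ell}$, i.e.\ a component that either has size strictly greater than $\ell$, or has size exactly $\ell$ but contains two vertices at distance more than $\epsilon r$ from its leftmost one. We use the first moment method: for each such ``bad'' component type we bound the expected number of components of that type and show it is $O(1/\log^{\ell}n)$. As in Lemma~\ref{lem:EZei}, we fix an index set $J$ of size $k$ (for components of size $k$) together with a distinguished leftmost index $i$ and a distinguished index $j$ realizing the maximum distance $\rho=d(X_i,X_j)$ within $\cY=\bigcup_{k\in J}X_k$, and we write the expected count as $n(n-1)\binom{n-2}{k-2}\pr(\cE_k)$, where $\cE_k$ asks that $\cY$ forms a component with the prescribed extremal structure. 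The key geometric input is again the area bound~\eqref{eq:Sbound}, or rather its analogue: the set $\cS$ of points within distance $r$ of $\cY$ has area at least $\pi r^2(1+c\,\rho/r)$ for an absolute constant $c>0$, \emph{provided} $\rho$ is not too large; for larger $\rho$ one needs a cruder but still useful lower bound on $\ar{\cS}$ growing with $\rho$ (up to the point where the component stops being embeddable).

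The argument then splits into two regimes. \textbf{Regime 1: small components with a spread-out geometry}, namely size exactly $\ell$ but with $\rho\in(\epsilon r, r/2]$ say (still embeddable and small). Here we integrate exactly as in~\eqref{eq:PE}--\eqref{eq:Ibeta2}, but the integration range for $x=\rho/r$ now starts at $\epsilon$ rather than $0$. Since the integrand is $x^{2\ell-3}n^{-\beta x}$ and $n^{-\beta x}\le n^{-\beta\epsilon}$ on $[\epsilon,1/2]$, we gain an extra factor $n^{-\Omega(1)}$ over the estimate in Lemma~\ref{lem:EZei}, so the contribution is $O(1/(n^{\Omega(1)}\log^{\ell-1}n))=O(1/\log^{\ell}n)$ and in fact much smaller. \textbf{Regime 2: components of size $k\ge\ell+1$}. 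For a component of size $k$ the expected count is $\Theta(n^k)\,\pr(\cE_k)$, and $\pr(\cE_k)$ is bounded by integrating over $\rho$ the density of $d(X_i,X_j)$, times the probability the other $k-2$ vertices of $\cY$ fall in a region of area $O(\rho^2)$ (if $\rho\le\epsilon r$) or $O(r^2)=O(\log n/n)$ in general, times $P=(1-\ar{\cS})^{n-k}\le (\mu/n)^{1+\Omega(\rho/r)}$ up to lower-order factors. The dominant contribution comes from $\rho$ small, where one essentially repeats the computation of Lemma~\ref{lem:EZei} with $\ell$ replaced by $k$: this gives $\Theta(n^k)\cdot\Theta(1/(n^{k}\log^{k-1}n))=\Theta(1/\log^{k-1}n)$. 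Since $k\ge\ell+1$ this is $O(1/\log^{\ell}n)$, and summing over the finitely many relevant values of $k$ up to the threshold where components cease to be embeddable keeps us at $O(1/\log^{\ell}n)$.

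The main obstacle is handling \emph{large} (non-embeddable, non-solitary) components of unbounded size: the first-moment sum over all $k$ up to $\Theta(1/r)$ or beyond is not obviously convergent if one only uses $\ar{\cS}\ge\pi r^2(1+c\rho/r)$, since $\rho$ can be as large as $\Theta(1)$ and the number of vertex sets of size $k$ grows like $n^k$. The fix is that for a non-embeddable component $\cS$ must ``wrap around'' the torus in some direction, which forces $\ar{\cS}\ge c'$ for an absolute constant $c'>0$ (indeed $\cS$ must contain a full horizontal or vertical strip-like region of width $2r$, so $\ar{\cS}\gtrsim r$), hence $P\le e^{-c'n}=n^{-\omega(1)}$, which crushes any polynomial factor $n^{k}$ for $k$ up to, say, $n^{0.99}$; for even larger $k$ the component has $\ge n^{0.99}$ vertices and there is essentially nothing left to be non-solitary, so either one invokes Proposition~\ref{prop:wellknown}-style facts or argues directly that two disjoint such components cannot both avoid being solitary. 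One must also be slightly careful that the distinguished pair $(i,j)$ is well defined (unique leftmost and unique farthest vertex), which holds with probability $1$ exactly as in Lemma~\ref{lem:EZei}, and that the constant $\epsilon$ enters only through the harmless factor $n^{-\beta\epsilon}$ in Regime~1. Putting the three pieces together and using $\PR{\tK_\ell-K'_{\epsilon,\ell}>0}\le\EXP{\tK_\ell-K'_{\epsilon,\ell}}$ (valid since this is a difference of counts that is nonnegative on the relevant event, or more cleanly by bounding the expected number of bad components directly) yields the claimed $O(1/\log^{\ell}n)$.
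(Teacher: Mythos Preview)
Your outline has the right spirit for the ``small'' cases but contains two genuine gaps that the paper's proof is specifically designed to avoid.

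\textbf{The sum over sizes $k\ge\ell+1$ is not under control.} You write that repeating Lemma~\ref{lem:EZei} with $\ell$ replaced by $k$ gives $\Theta(1/\log^{k-1}n)$ and then sum over ``finitely many relevant values of $k$''. But the range of $k$ is not finite (a component of diameter $\le\epsilon r$ can still have arbitrarily many vertices), and the implied constant in $\Theta(1/\log^{k-1}n)$ coming from~\eqref{eq:Ibeta2} is of order $(2k-3)!/\beta^{2k-2}$, so the terms are really of order $(Ck/\log n)^{k}$ and start \emph{increasing} once $k$ exceeds a small multiple of $\log n$. The paper handles this in two pieces: for $\ell+1\le k\le (\log n)/37$ it bounds $E_k$ uniformly (not just term by term) and shows the sequence is decreasing; for $k>(\log n)/37$ it abandons the first-moment computation over vertex sets entirely and uses a Chernoff bound to show that no box of side $\sim 2\epsilon r$ can contain that many points. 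Your proposal has no substitute for this second step.

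\textbf{The non-embeddable (and large-diameter embeddable) case cannot be done by summing over vertex sets.} Your claim that for a non-embeddable component $\ar{\cS}\ge c'$ for an absolute constant $c'>0$ is false: a component that wraps once around the torus can have $\ar{\cS}$ as small as $\Theta(r)=\Theta(\sqrt{(\log n)/n})\to 0$, so the emptiness probability is only $e^{-\Theta(\sqrt{n\log n})}$. Meanwhile such a component has $k=\Omega(1/r)=\Omega(\sqrt{n/\log n})$ vertices, and the combinatorial factor $\binom{n}{k}$ (together with the probability that the $k$ chosen points form a connected set, which is of order $(\log n)^{k-1}$) is of the same exponential order, so the first moment does not go to zero. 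The paper sidesteps this completely: for embeddable components of diameter $\ge\epsilon r$ (Part~3) and for non-embeddable non-solitary components (Part~4) it never sums over vertex sets. Instead it shows that any such component forces a \emph{connected} union of cells of a fixed tessellation, of area at least $(1+\epsilon/6)\pi r^2$ (resp.\ $(11/5)\pi r^2$), to be empty of vertices; since there are only $\Theta(n/\log n)$ such cell-unions of the required area, a union bound gives $O(n^{-\epsilon/6}/\log n)$ (resp.\ $O(n^{-6/5}/\log n)$). This tessellation idea is the missing ingredient in your argument; without it, the bound for spread-out components does not close.

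Your Regime~1 (size exactly $\ell$, diameter in $(\epsilon r, r/2]$) is fine, and indeed for fixed $\ell$ one can push it to all diameters since the prefactor is only $n^{\ell}$. But this does not extend to growing $k$, which is where the real work lies.
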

\begin{proof}
We assume throughout this proof that $\epsilon\le10^{-18}$, and prove the claim for this case.
The case $\epsilon>10^{-18}$ follows from the fact that
$(\tK_\ell-K'_{\epsilon,\ell}) \le (\tK_\ell-K'_{10^{-18},\ell})$.

Consider all the possible components in $\nRG$ which are not solitary. Remove from these components the ones of size at most $\ell$ and diameter at most $\epsilon r$, and denote by $M$ the number of remaining components. By construction $\tK_\ell-K'_{\epsilon,\ell} \le M$, and therefore it is sufficient to prove that $\pr(M>0)=O(1/\log^{\ell} n)$.
The components counted by $M$ are classified into several types according to their size and diameter. We deal with each type separately.
\setcounter{pt}{0}
\begin{pt}\label{p:small}
Consider all the possible components in $\nRG$ which have diameter at most $\epsilon r$ and size between $\ell+1$ and  $\log n/37$. Call them components of type~\ref{p:small}, and let $M_{\ref{p:small}}$ denote their number.

For each $k$, $\ell+1\le k\le\log n/37$, let $E_k$ be the expected number of components of type~\ref{p:small} and size $k$.
We observe that these components have all of their vertices at distance at most $\epsilon r$ from the leftmost one. Therefore, we can apply the
same argument we used for bounding $\ex K'_{\epsilon,\ell}$ in the proof of Lemma~\ref{lem:EZei}. Note that~\eqref{eq:EZei}, \eqref{eq:PE} and~\eqref{eq:Ibeta} are also valid for sizes not fixed but depending on $n$. Thus, we obtain
\[
E_k \le O(1) n(n-1)\binom{n-2}{k-2} I(1/6),
\]
where $I(1/6)$ is defined in~\eqref{eq:Ibeta}. We use the fact that $\binom{n-2}{k-2}\le(\frac{ne}{k-2})^{k-2}$ and get
\begin{equation}\label{eq:El}
E_k = O(1)  \log n \left(\frac{e}2  \frac{\log n}{k-2}\right)^{k-2} \int_0^{\epsilon} x^{2k-3} n^{-x/6}  \,dx.
\end{equation}
The expression $x^{2k-3} n^{-x/6}$ can be maximized for $x\in\real^+$ by elementary techniques, and we deduce that 
\[
x^{2k-3} n^{-x/6} \le \left(\frac{2k-3}{(e/6)\log n}\right)^{2k-3}.
\]
We can bound the integral in~\eqref{eq:El} and get
\begin{align*}
E_k &= O(1)  \log n \left(\frac{e}2  \frac{\log n}{k-2}\right)^{k-2}  \epsilon \left(\frac{2k-3}{(e/6)\log n}\right)^{2k-3}
\\
&= O(1)   \left(\frac{36}{2e}  \frac{(2k-3)^2}{(k-2) \log n}\right)^{k-2} k.
\end{align*}
Note that for $k\le\log n/37$ the expression $k \left(\frac{36}{2e}  \frac{(2k-3)^2}{(k-2) \log n}\right)^{k-2}$ is decreasing with $k$. Hence we can write
\[
E_k = O \left(\frac{1}{\log^{\ell+1} n}\right), \qquad \forall k \;:\; \ell+3\le k\le\frac1{37}\log n.
\]
Moreover the bounds
$E_{\ell+1}=O(1/\log^{\ell}n)$ and $E_{\ell+2}=O(1/\log^{\ell+1}n)$ are obtained from Lemma~\ref{lem:EZei}, and hence
{\small
\[
\ex M_{\ref{p:small}} = \sum_{k=\ell+1}^{\frac1{37}\log n} E_k
= O\left(\frac{1}{\log^{\ell} n}\right) +  O\left(\frac{1}{\log^{\ell+1} n}\right) + \frac{\log n}{37} O\left(\frac{1}{\log^{\ell+1} n}\right)
=  O\left(\frac{1}{\log^{\ell} n}\right),
\]
}
and then $\pr(M_{\ref{p:small}}>0)\le\ex M_{\ref{p:small}} = O(1/\log^{\ell} n).$
\end{pt}
\begin{pt}\label{p:dense}
Consider all the possible components in $\nRG$ which have diameter at most $\epsilon r$ and size greater than $\log n/37$. Call them components of type~\ref{p:dense}, and let $M_{\ref{p:dense}}$ denote their number.

We tessellate the torus with square cells of side $y=\lfloor(\epsilon r)^{-1}\rfloor^{-1}$ ($y\ge\epsilon r$ but also $y\sim\epsilon r$).
We define a box to be a square of side $2y$ consisting of the union of $4$ cells of the tessellation. Consider the set of all possible boxes.
Note that any component  of  type~\ref{p:dense} must be fully contained in some box (see Figure~\ref{tas0}).

\begin{figure}
\centerline{\epsfig{figure=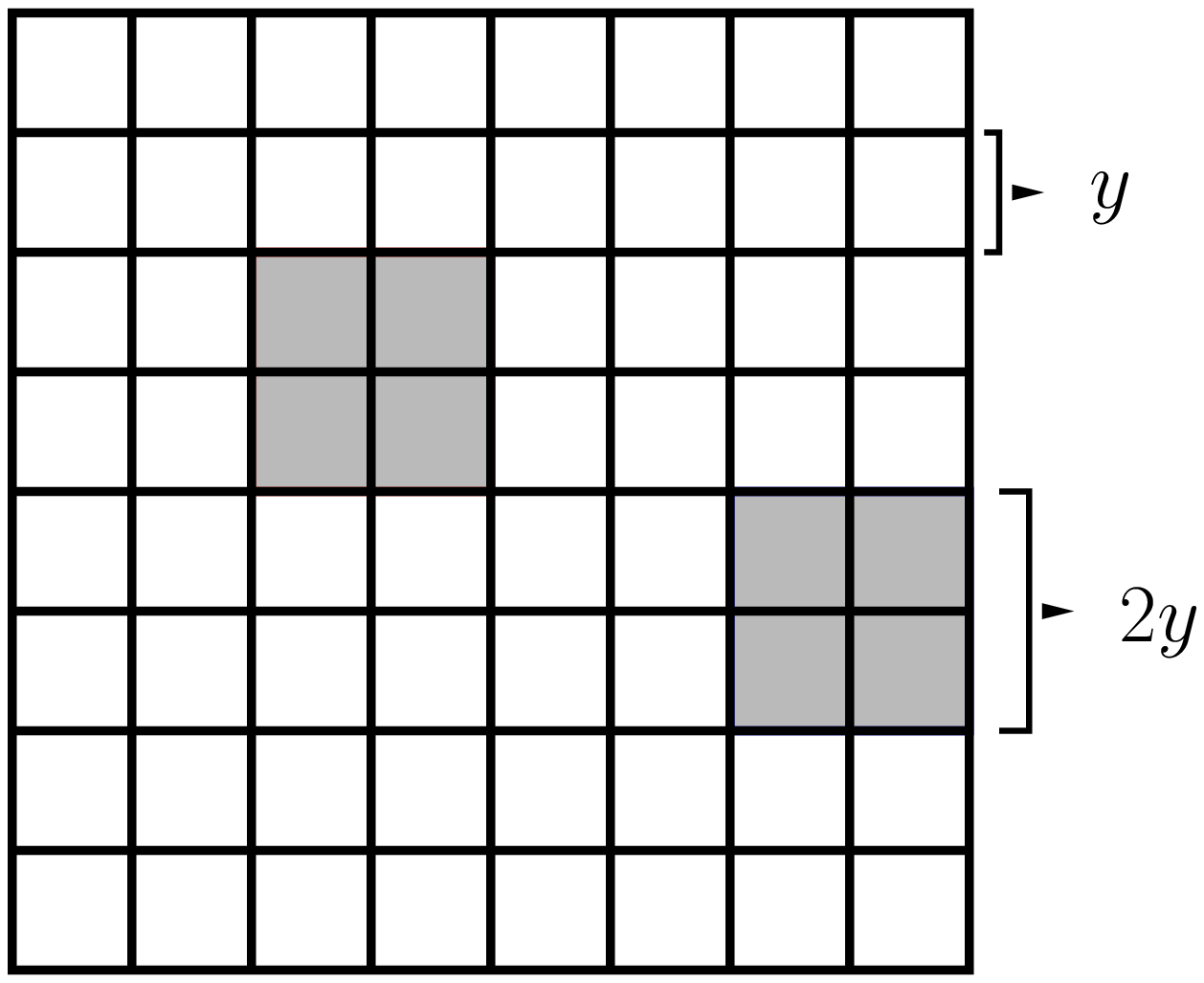,height=3cm,width=4cm}}
\caption{The tessellation for counting components of type~2 with two particular boxes marked.}
\label{tas0}
\end{figure}

Let us fix a box $b$. Let $W$ be the number of vertices which are contained inside $b$. Notice that $W$ has a binomial distribution with mean
$\ex W = (2y)^2n \sim (2\epsilon)^2\log n/\pi$. By
setting $\delta=\frac{\log n}{37\ex W} - 1$  and applying the Chernoff inequality to $W$ (see e.g.~\cite{DiazPetitSerna}, Theorem 12.7), we have
\[
\pr(W>\frac1{37}\log n) = \pr(W>(1+\delta)\ex W) \le \left(\frac{e^\delta}{(1+\delta)^{1+\delta}}\right)^{\ex W}
= n^{ -\frac{(\log(1+\delta) -  \frac\delta{1+\delta})}{37} }.
\]
Note that $\delta \sim \frac{\pi}{148\epsilon^2}-1> e^{79}$, therefore
\[
\pr(W>\frac1{37}\log n)  < n^{-2.1}.
\]
Taking a union bound over the set of all $\Theta(r^{-1})=\Theta(n/\log n)$ boxes, the probability that there is some box with more than $\frac1{37}\log n$ vertices is $O(1/(n^{1.1}\log n))$. Since each component of type~\ref{p:dense} is contained in some box, we have
\[
\pr(M_{\ref{p:dense}}>0)=O(1/(n^{1.1}\log n)).
\]
\end{pt}
\begin{pt}\label{p:large}
Consider all the possible components in $\nRG$ which are embeddable and have diameter at least $\epsilon r$. 
Call them components of type~\ref{p:large}, and let $M_{\ref{p:large}}$ denote their number.

We tessellate the torus into square cells of side $\alpha r$, for some $\alpha=\alpha(\epsilon)>0$ fixed but 
sufficiently small. 
Let $\Gamma$ be a component of type~\ref{p:large}.
Let $\cS=\cS_\Gamma$ be the set of all points in the torus $[0,1)^2$ which are at distance at most $r$ from some vertex in $\Gamma$.
Remove from $\cS$ the vertices of $\Gamma$ and the edges (represented by straight line segments) and denote 
by $\cS'$ the outer connected topologic component of the remaining set. 
By construction, $\cS'$ must contain no vertex in $\cX$ (see Figure~\ref{tas1}, left picture).

\begin{figure}
\centerline{\epsfig{figure=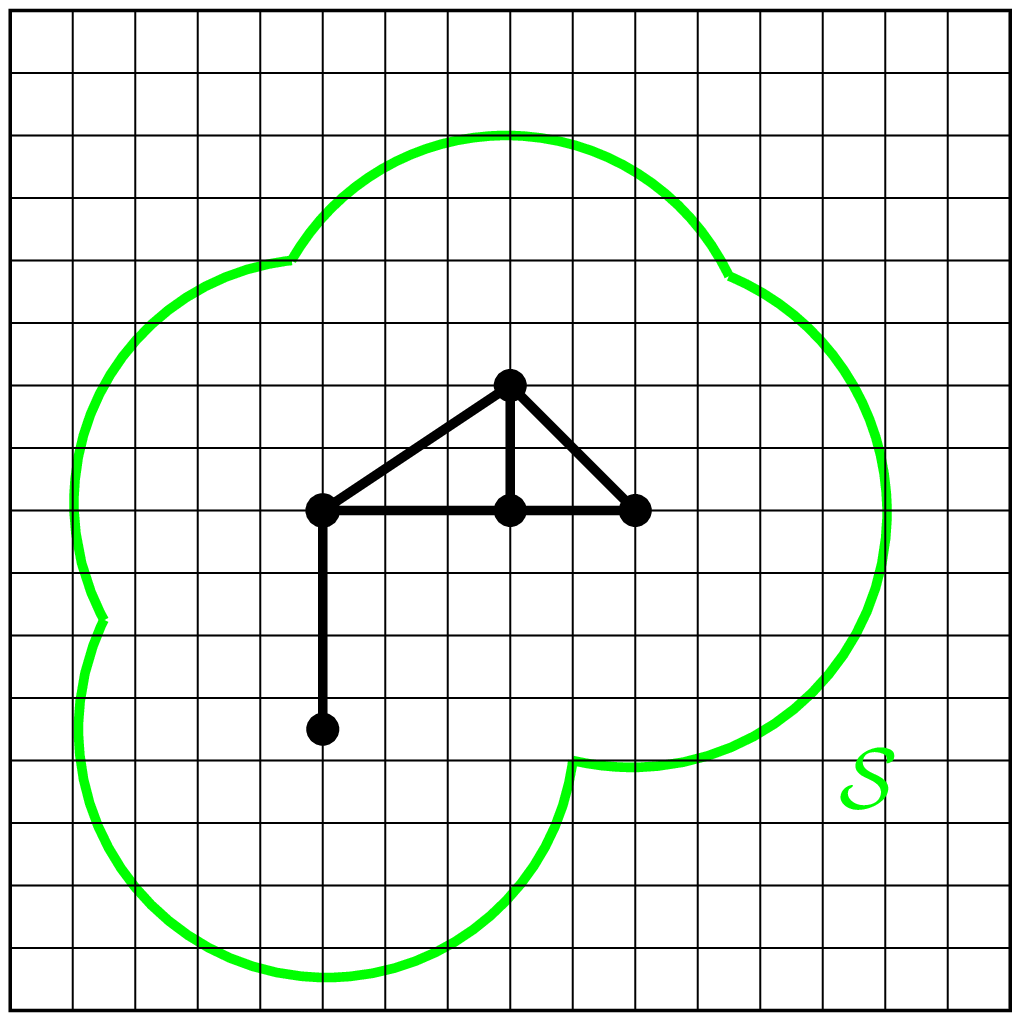,height=3cm,width=3cm}\medspace\medspace\medspace
\medspace\medspace\medspace\epsfig{figure=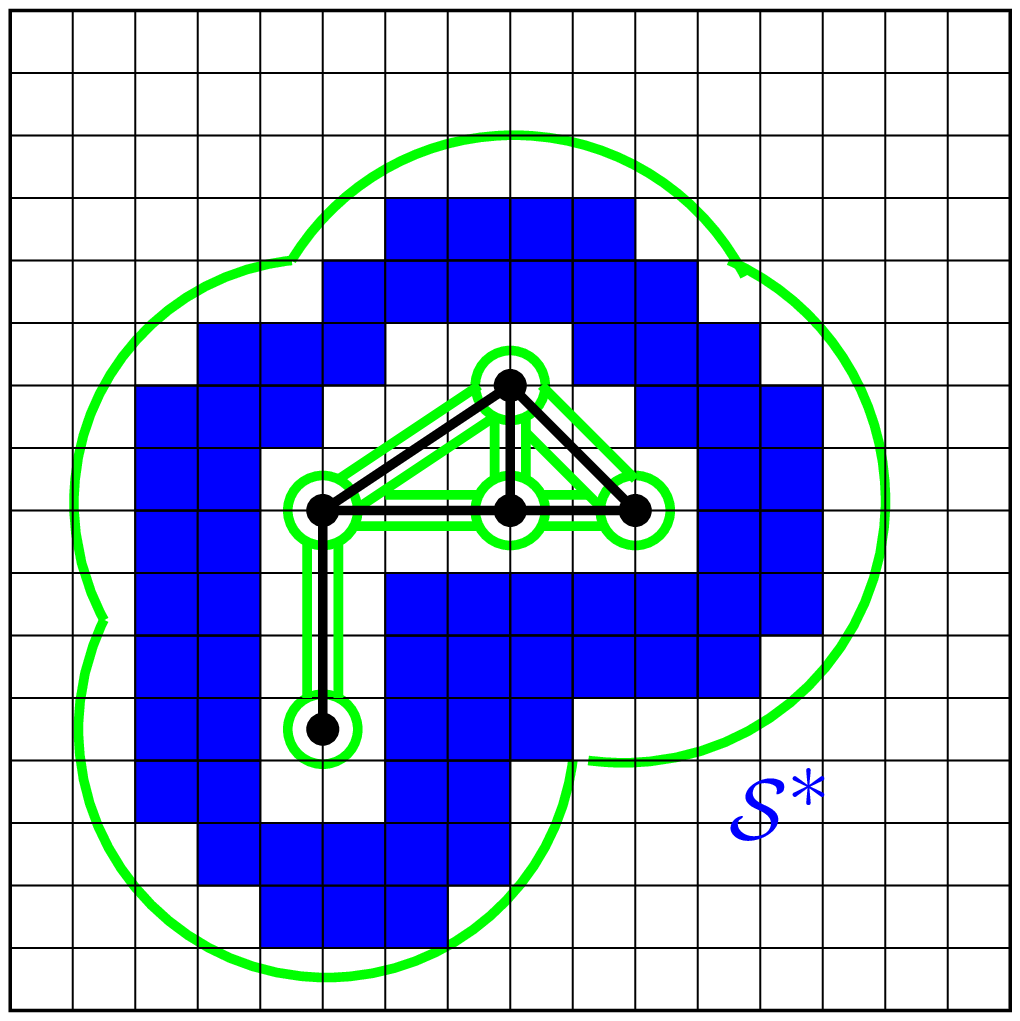,height=3cm,width=3cm}}
\caption{The tessellation for counting components of type~3.}
\label{tas1}
\end{figure}

Now let $i_{\mathsf L}$, $i_{\mathsf R}$, $i_{\mathsf T}$ and $i_{\mathsf B}$ be respectively the indices of the leftmost,
 rightmost, topmost and  bottommost vertices in $\Gamma$ (some of these indices possibly equal).
As in the previous setting, assume that the vertical length of $\Gamma$ (i.e.\ the vertical distance between $X_{i_{\mathsf T}}$ 
and $X_{i_{\mathsf B}}$) is at least $\epsilon r/\sqrt2$. Otherwise, the horizontal length of $\Gamma$ has this property 
and we can rotate the descriptions in the argument. The upper halfcircle with center $X_{i_{\mathsf T}}$ and the lower halfcircle with center $X_{i_{\mathsf B}}$ are disjoint and are contained in $\cS'$. 
If $X_{i_{\mathsf R}}$ is at greater vertical distance from $X_{i_{\mathsf T}}$ than from $X_{i_{\mathsf B}}$, then consider the rectangle of height $\epsilon r/(2\sqrt2)$ and width $r-\epsilon r/(2\sqrt2)$ with one corner on $X_{i_{\mathsf R}}$ and above and to the right of $X_{i_{\mathsf R}}$. Otherwise, consider the same rectangle below and to the right of $X_{i_{\mathsf R}}$.
This rectangle is also contained in $\cS'$ and its interior does not intersect the previously described halfcircles. 
Analogously, we can find another rectangle of height $\epsilon r/(2\sqrt2)$ and width $r-\epsilon r/(2\sqrt2)$ to the left of $X_{i_{\mathsf L}}$ and either above or below $X_{i_{\mathsf L}}$, with the same properties. Hence, taking into account that $\epsilon\le10^{-18}$, we have
\begin{equation}\label{eq:S'bound}
\ar{\cS'}\ge \pi r^2 + 2\left(\frac{\epsilon r}{2\sqrt2}\right)\left(r-\frac{\epsilon r}{2\sqrt2}\right) > \left(1+\frac{\epsilon}{5}\right)\pi r^2.
\end{equation}
Let $\cS^*$ be the union of all the cells in the tessellation which are fully contained in $\cS'$. We loose a 
bit of area compared to $\cS'$. However, if $\alpha$ was chosen small enough, we can guarantee that $\cS^*$ is 
topologically connected and has area $\ar{\cS^*}\ge(1+\epsilon/6)\pi r^2$. This $\alpha$ can be chosen to be the 
same for all components of type~\ref{p:large} (see Figure~\ref{tas1}, right picture).

Hence, we showed that the event $(M_{\ref{p:large}}>0)$ implies that some connected union of cells $\cS^*$ of 
area $\ar{\cS^*}\ge(1+\epsilon/6)\pi r^2$ contains no vertices. By removing some cells from $\cS^*$, we can assume that
$(1+\epsilon/6)\pi r^2 \le \ar{\cS^*} <  (1+\epsilon/6)\pi r^2 + \alpha^2 r^2$.
Let $\cS^*$ be any union of cells with these properties. Note that there are $\Theta(1/r^2)=\Theta(n/\log n)$ 
many possible choices for $\cS^*$. The probability that $\cS^*$ contains no vertices is
\[
(1-\ar{\cS^*})^n \le e^{-(1+\epsilon/6)\pi r^2 n} = \left(\frac\mu n\right)^{1+\epsilon/6}.
\]
Therefore, we can take the union bound over all the $\Theta(n/\log n)$ possible $\cS^*$,
and obtain an upper bound of the probability that there is some component of the type~\ref{p:large}:
\[
\pr(M_{\ref{p:large}}>0) \le \Theta\left(\frac{n}{\log n}\right) \left(\frac\mu n\right)^{1+\epsilon/6}
= \Theta\left(\frac{1}{n^{\epsilon/6}\log n}\right) .
\]
\end{pt}
\begin{pt}\label{p:notembed}
Consider all the possible components in $\nRG$ which are not embeddable and not solitary either.  Call them components of type~\ref{p:notembed}, and let $M_{\ref{p:notembed}}$ denote their number.

We tessellate the torus $[0,1)^2$ into $\Theta(n/\log n)$ small square cells of side length $\alpha r$, where $\alpha>0$ is a sufficiently small positive constant.

Let $\Gamma$ be a component of type~\ref{p:notembed}.
Let $\cS=\cS_\Gamma$ be the set of all points in the torus $[0,1)^2$ which are at distance at most $r$ from some vertex in $\Gamma$.
Remove from $\cS$ the vertices of $\Gamma$ and the edges (represented by straight segments) 
and denote by $\cS'$ the remaining set. By construction, $\cS'$ must contain no vertex in $\cX$.

Suppose there is a horizontal or a vertical band of width $2r$ in $[0,1)^2$ which does not intersect the component $\Gamma$ (assume w.l.o.g. that it is the topmost horizontal band consisting of all points with the $y$-coordinate in $[1-2r,1)$).
Let us divide the torus into vertical bands of width $2r$. All of them must contain at least one vertex of $\Gamma$, since otherwise $\Gamma$ would be embeddable. Select any $9$ consecutive vertical bands and pick one vertex of $\Gamma$ with maximal $y$-coordinate in each one. For each one of these $9$ vertices, we select the left upper quartercircle centered at the vertex if the vertex is closer to the right side of the band or the right upper quartercircle otherwise. These nine quartercircles we chose are disjoint and must contain no vertices by construction. Moreover, they belong to the same connected component of the set $\cS'$, which we denote by $\cS''$, and which has an area of $\ar{\cS''}\ge(9/4)\pi r^2$.
Let $\cS^*$ be the union of all the cells in the tessellation of the torus which are completely contained in $\cS''$.
We lose a bit of area compared to $\cS''$. However,
as usual, by choosing $\alpha$ small enough we can guarantee that $\cS^*$ is connected and it has an area of $\ar{\cS^*}\ge(11/5)\pi r^2$.
Note that this $\alpha$ can be chosen to be the same for all components $\Gamma$ of this kind.
\remove{\begin{figure}
\centerline{\epsfig{figure=tasellation6.eps,height=3cm,width=3cm}\medspace\medspace\medspace
\medspace\medspace\medspace\epsfig{figure=tasellation7.eps,height=3cm,width=3cm}} 
\caption{The tassellation for counting components of type~4.}
\label{tas2}
\end{figure}}

Suppose otherwise that all horizontal and vertical bands of width $2r$ in $[0,1)^2$ contain at least one 
vertex of $\Gamma$. Since $\Gamma$ is not solitary it must be possible that it coexists with some other 
non-embeddable component $\Gamma'$. Then all vertical bands or all horizontal bands of 
width $2r$ must also contain some vertex of $\Gamma'$ (assume w.l.o.g. the vertical bands do). Let us 
divide the torus into vertical bands of width $2r$. We can find a simple path $\Pi$ with vertices in $\Gamma'$ which passes 
through $11$ consecutive bands. For each one of the $9$ internal bands, pick the uppermost vertex of 
$\Gamma$ in the band below $\Pi$ (in the torus sense). As before each one of these vertices 
contributes with a disjoint quartercircle which must be empty of vertices, and by the same argument 
we obtain a connected union of cells of the tessellation, which we denote by $\cS^*$, with $\ar{\cS^*}\ge(11/5)\pi r^2$ and containing no vertices.

Hence, we showed that the event $(M_{\ref{p:notembed}}>0)$ implies that some connected union of cells $\cS^*$ with $\ar{\cS^*}\ge(11/5)\pi r^2$ contains no vertices.
By repeating the same argument we used for components of type~\ref{p:large} but replacing $(1+\epsilon/6)\pi r^2$ by $(11/5)\pi r^2$, we get
\[
\pr(M_{\ref{p:notembed}}>0) = \Theta\left(\frac{1}{n^{6/5}\log n}\right) .
\]
\end{pt}
\end{proof}
For a random variable $X$ and any $k \geq 1$, we denote by $\ex[X]_k$ the $k$th factorial moment of $X$, i.e. $\ex [X]_k=\ex[X(X-1)\ldots(X-k+1)]$.
\begin{lem}\label{lem:EZei2}
Let  $\ell\ge2$ be a fixed integer. Let $0<\epsilon<1/2$ be fixed. 
Assume that $\mu=\Theta(1)$. Then
\[
\EXP{K'_{\epsilon,\ell}}_2 = O(1/\log^{2\ell-2} n).
\]
\end{lem}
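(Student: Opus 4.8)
The plan is to estimate $\ex[K'_{\epsilon,\ell}]_2$ by the usual expansion over ordered pairs of (necessarily disjoint) $\ell$-sets of indices, and then to split into two regimes according to whether the two associated ``exclusion regions'' overlap. First I would fix disjoint index sets $J_1,J_2\subset\nset$ with $|J_1|=|J_2|=\ell$, each carrying two distinguished indices $i_k,j_k$ ($k=1,2$), and let $\cE_k$ be the analogue for $J_k$ of the event $\cE$ in the proof of Lemma~\ref{lem:EZei}: the points indexed by $J_k$ form a component $\Gamma_k$ of $\nRG$, all lying to the right of and within distance $\epsilon r$ of the leftmost one $X_{i_k}$, with $X_{j_k}$ the one farthest from $X_{i_k}$. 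Writing $\rho_k=d(X_{i_k},X_{j_k})$ and letting $\cS_k$ be the set of points of the torus within distance $r$ of $\Gamma_k$, and counting the choices of $(J_1,i_1,j_1)$ and $(J_2,i_2,j_2)$, I get
\[
\ex[K'_{\epsilon,\ell}]_2 = \binom{n}{\ell}\binom{n-\ell}{\ell}\bigl(\ell(\ell-1)\bigr)^2\,\pr(\cE_1\cap\cE_2) = \Theta(n^{2\ell})\,\pr(\cE_1\cap\cE_2),
\]
and on $\cE_1\cap\cE_2$ the components $\Gamma_1,\Gamma_2$ are distinct, so $\dist(\Gamma_1,\Gamma_2)>r$ and the $n-2\ell$ vertices outside $J_1\cup J_2$ all lie outside $\cS_1\cup\cS_2$.

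In the first regime, $\cS_1\cap\cS_2=\emptyset$, so $\ar{\cS_1\cup\cS_2}=\ar{\cS_1}+\ar{\cS_2}$ and hence $(1-\ar{\cS_1\cup\cS_2})^{n-2\ell}\le(1-\ar{\cS_1})^{n-2\ell}(1-\ar{\cS_2})^{n-2\ell}$ by $(1-a)(1-b)\ge 1-a-b$. I would then bound $\pr(\cE_1\cap\cE_2)$ on this event by exactly the integral from the proof of Lemma~\ref{lem:EZei}: integrating over $\rho_1,\rho_2\in[0,\epsilon r]$ the product of the two distance densities $\pi\rho_k$, the two ``inner'' probabilities $(\tfrac{\pi}{2}\rho_k^2)^{\ell-2}$ and $(1-\ar{\cS_1})^{n-2\ell}(1-\ar{\cS_2})^{n-2\ell}$. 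Replacing each $\ar{\cS_k}$ by its $\rho_k$-dependent lower bound from~\eqref{eq:Sbound} makes the integrand independent of the locations of $X_{i_1},X_{i_2}$, so integrating those out costs a factor at most $1$ and the rest factorizes into the square of $\int_0^{\epsilon r}\pi\rho(\tfrac{\pi}{2}\rho^2)^{\ell-2}(1-\ar{\cS})^{n-2\ell}\,d\rho$. Since $(1-\ar{\cS})^{n-2\ell}=O(1)\,(1-\ar{\cS})^{n-\ell}$ (using $\ar{\cS}<\tfrac94\pi r^2$) and $(1-\ar{\cS})^{n-\ell}=\Theta(1)\,n^{-(1+\rho/(6r))}$ by~\eqref{eq:Pbound} and $\mu=\Theta(1)$, this integral is $\Theta(I(1/6))=\Theta(1/(n^\ell\log^{\ell-1}n))$ by~\eqref{eq:Ibeta2}; squaring and multiplying by $\Theta(n^{2\ell})$ gives an $O(1/\log^{2\ell-2}n)$ contribution from this regime.

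In the second regime, $\cS_1\cap\cS_2\ne\emptyset$, which forces $\dist(\Gamma_1,\Gamma_2)\le 2r$, hence $r<\dist(\Gamma_1,\Gamma_2)\le 2r$ and $d(X_{i_1},X_{i_2})\le(2+2\epsilon)r<3r$. I would use this twice. First, the location of $X_{i_2}$ is now confined to a disc of radius $3r$ about $X_{i_1}$, of area $\Theta(\log n/n)$, rather than the whole torus. Second, picking $u_k\in\Gamma_k$ realizing $\dist(\Gamma_1,\Gamma_2)$, the discs of radius $r$ about $u_1$ and $u_2$ lie in $\cS_1$ and $\cS_2$ respectively and have centres at distance in $(r,2r]$, so they overlap in area at most that of the lens of two radius-$r$ discs whose centres are at distance $r$, namely $(\tfrac{2\pi}{3}-\tfrac{\sqrt3}{2})r^2<\tfrac34\pi r^2$; hence $\ar{\cS_1\cup\cS_2}\ge 2\pi r^2-\tfrac34\pi r^2=\tfrac54\pi r^2$ and $(1-\ar{\cS_1\cup\cS_2})^{n-2\ell}=O((\mu/n)^{5/4})=O(n^{-5/4})$. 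Feeding these into the same integral representation and bounding each $\rho_k$-integral crudely by $\int_0^{\epsilon r}\pi\rho(\tfrac{\pi}{2}\rho^2)^{\ell-2}\,d\rho=\Theta(r^{2\ell-2})$, I expect $\pr(\cE_1\cap\cE_2)$ restricted to this event to be $O(r^2\cdot(r^{2\ell-2})^2\cdot n^{-5/4})=O(\log^{2\ell-1}n/n^{2\ell+1/4})$, so this regime contributes $O(\log^{2\ell-1}n/n^{1/4})=O(1/\log^{2\ell-2}n)$. Adding the two regimes gives the lemma.

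The main obstacle is the second regime: the naive bound that keeps only $\cS_1$ in the exponent and charges nothing extra for $\cS_2$ overshoots by a polylogarithmic factor and is insufficient. The remedy is the elementary geometric observation that two distinct components whose $r$-neighbourhoods meet lie at distance strictly between $r$ and $2r$, which simultaneously pins $X_{i_2}$ into a disc of area $\Theta(r^2)$ and forces $\ar{\cS_1\cup\cS_2}$ to exceed $\pi r^2$ by a constant factor — the combination being just enough to kill this regime. I would also need to check that, for $n$ large, all the discs and lenses involved have radius $o(1)$ and so may be treated as genuine Euclidean ones on the torus.
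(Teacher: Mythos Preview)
Your proposal is correct and follows essentially the same approach as the paper: both split according to whether the two exclusion regions interact (the paper conditions on $d(X_{i_1},X_{i_2})>3r$, you on $\cS_1\cap\cS_2=\emptyset$, and you yourself reduce your second regime to $d(X_{i_1},X_{i_2})<3r$), factorize the first regime into $I(1/6)^2$, and handle the second via the $O(r^2)$ cost of confining $X_{i_2}$ together with a constant-factor gain in $\ar{\cS_1\cup\cS_2}$ over $\pi r^2$. The only cosmetic differences are that the paper uses the simpler bound $\ar{\cS}\ge\tfrac32\pi r^2$ (from the circles about $X_{i_1},X_{i_2}$ directly, since $d(X_{i_1},X_{i_2})>r$) rather than your lens estimate giving $\tfrac54\pi r^2$, yielding $n^{-1/2}$ instead of your $n^{-1/4}$ in the negligible term.
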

\begin{proof}
As in the proof of Lemma~\ref{lem:EZei}, we assume that each component $\Gamma$ which contributes to $K'_{\epsilon,\ell}$ has a unique leftmost vertex $X_i$, and the vertex $X_j$ in $\Gamma$ at greatest distance from $X_i$ is also unique. In fact, this happens with probability $1$.

Choose any two disjoint subsets of $\nset$ of size $\ell$ each, namely $J_1$ and $J_2$, with four distinguished elements $i_1,j_1\in J_1$ and $i_2,j_2\in J_2$.
For $k\in\{1,2\}$, denote by $\cY_k = \bigcup_{l\in J_k} X_l$ the set of random points in $\cX$ with indices in $J_k$.
Let $\cE$ be the event that the following conditions hold for both $k=1$ and $k=2$:
All vertices in $\cY_k$ are at distance at most $\epsilon r$ from $X_{i_k}$ and to the right of $X_{i_k}$; vertex $X_{j_k}$ is the one in $\cY_k$ with greatest distance from $X_{i_k}$; and the vertices of $\cY_k$ form a component $\Gamma$ of $\nRG$.
If $\pr(\cE)$ is multiplied by the number of possible choices of $i_k$, $j_k$ and the remaining vertices of $J_k$, we get
\begin{equation}\label{eq:EZei2}
\ex[K'_{\epsilon,\ell}]_2 = O(n^{2\ell}) \pr(\cE).
\end{equation}

In order to bound the probability of $\cE$ we need some definitions.
For each $k\in\{1,2\}$, let $\rho_k=d(X_{i_k},X_{j_k})$ and let $\cS_k$ be the set of all the points in the torus $[0,1)^2$ which are at distance at most $r$ from some vertex in $\cY_k$. Obviously $\rho_k$ and $\cS_k$ depend on the set of random points $\cY_k$. Also define $\cS=\cS_1\cup\cS_2$.

Let $\cF$ be the event that $d(X_{i_1},X_{i_2})>3r$. This holds with probability $1-O(r^2)$. In order  to bound $\pr(\cE\mid\cF)$, we apply a similar approach to the one in the proof of Lemma~\ref{lem:EZei}. In fact,  observe that if $\cF$ holds then $\cS_1\cap\cS_2=\emptyset$.
Therefore in view of~\eqref{eq:Sbound} we can write
\begin{equation}	\label{eq:Sbound2}
\pi r^2(2+(\rho_1+\rho_2)/(6r))<\ar{\cS}<\frac{18\pi}{4}r^2,
\end{equation}
and using the same techniques that gave us~\eqref{eq:Pbound} we get
\begin{equation}\label{eq:Pbound2}
(1-\ar{\cS})^{n-2\ell} <  \left(\frac\mu n\right)^{2+(\rho_1+\rho_2)/(6r)} \frac1{(1-18\pi r^2/4)^{2\ell}}.
\end{equation}
Observe that $\cE$ can also be described as follows: For each $k\in\{1,2\}$ there is some non-negative real $\rho_k\le\epsilon r$ such that $X_{j_k}$ is placed at distance $\rho_k$ from $X_{i_k}$ and to the right of $X_{i_k}$; all the remaining vertices in $\cY_k$ are inside the halfcircle of center $X_{i_k}$ and radius $\rho_k$;
and the $n-\ell$ vertices not in $\cY_k$ lie outside $\cS_k$.
In fact, rather than this last condition, we only require for our bound that all vertices in $\cX\setminus (\cY_1\cup \cY_2)$ are placed outside $\cS$, which has probability $(1-\ar{\cS})^{n-2\ell}$. Then, from~\eqref{eq:Pbound2} and  following an analogous argument to the one that leads to~\eqref{eq:PE},
we obtain the bound
\begin{align*}
\pr(\cE\mid\cF) &\le \Theta(1) \int_0^{\epsilon r}\int_0^{\epsilon r} \pi\rho_1 \left(\frac\pi2 \rho_1^2\right)^{\ell-2} \pi\rho_2 \left(\frac\pi2 \rho_2^2\right)^{\ell-2}
\frac1{n^{2+(\rho_1+\rho_2)/(6r)}} \, d\rho_1d\rho_2
\\
&= \Theta(1) \: I(1/6)^2,
\end{align*}
where $I(1/6)$ is defined in~\eqref{eq:Ibeta}. Thus from~\eqref{eq:Ibeta2} we conclude
\begin{equation}\label{eq:PEF}
\pr(\cE\wedge\cF) \le \Theta(1) \: P(\cF) \: I(1/6)^2 = O\left(\frac{1}{n^{2\ell} \log^{2\ell-2} n}\right).
\end{equation}

Otherwise, suppose that $\cF$ does not hold (i.e.\ $d(X_{i_1},X_{i_2})\le3r$). Observe that $\cE$ implies that $d(X_{i_1},X_{i_2})>r$, since $X_{i_1}$ and $X_{i_2}$ must belong to different components. Hence the circles with  centers on $X_{i_1}$ and $X_{i_2}$ and radius $r$ have an intersection of area less than $(\pi/2)r^2$. These two circles are contained in $\cS$
and then we can write $\ar{\cS}\ge(3/2)\pi r^2$.
Note that $\cE$ implies that all vertices in $\cX\setminus (\cY_1\cup \cY_2)$ are placed outside $\cS$ and that for each $k\in\{1,2\}$ all the vertices in $\cY_k\setminus\{X_{i_k}\}$ are at distance at most $\epsilon r$ and to the right of $X_{i_k}$.
This gives us the following rough bound
\[
\pr(\cE\mid\overline\cF) \le \left(\frac\pi2(\epsilon r)^2\right)^{2\ell-2}  \left(1-\frac{3\pi}2 r^2\right)^{n-2\ell}
= O(1) \left(\frac{\log n}{n}\right)^{2\ell-2}  \left(\frac\mu n\right)^{3/2}.
\]
Multiplying this  by $\pr(\overline\cF)=O(r^2)=O(\log n/n)$ we obtain
\begin{equation}\label{eq:PEnF}
\pr(\cE\wedge\overline\cF) = O\left(\frac{\log^{2\ell-1} n}{ n^{2\ell+1/2}}\right),
\end{equation}
which is negligible compared to \eqref{eq:PEF}. The statement follows from~\eqref{eq:EZei2}, \eqref{eq:PEF} and~\eqref{eq:PEnF}.
\end{proof}

Our main theorem now follows easily:
{From} Corollary~1.12 in~\cite{Bollobas01}, we have
\[
\ex K'_{\epsilon,\ell} - \frac12\ex [K'_{\epsilon,\ell}]_2 \le \pr(K'_{\epsilon,\ell}>0) \le \ex K'_{\epsilon,\ell},
\]
and therefore by Lemmata~\ref{lem:EZei} and~\ref{lem:EZei2} we obtain
\[
\pr(K'_{\epsilon,\ell}>0) = \Theta(1/\log^{\ell-1} n).
\]
Combining this and Lemma~\ref{lem:PY}, yields the statement.

\section{Proof of Corollary~\ref{cor:hitting}}\label{sec:corollary}

Before proving Corollary~\ref{cor:hitting}, we give a proof of Proposition~\ref{prop:wellknown}, since we will make use of the arguments used in the proof of this proposition.
\begin{proof}[Proof of Proposition~\ref{prop:wellknown}.]
Recall that $\mu=n e^{-\pi r^2 n}$ and $r=\sqrt{\frac{\log n-\log\mu}{\pi n}}$. Observe that $r\in[0,+\infty)$ is monotonically decreasing with respect to $\mu\in(0,n]$. Hence, the probability that $\nRG$ is connected is also decreasing with respect to $\mu$.

Suppose first that $\mu=\Theta(1)$. From~\eqref{eq:EK1} and since $O(r^4n)=o(1)$ we have that $\ex K_1\sim\mu$.
We shall compute the factorial moments of $K_1$ and show that $\ex[K_1]_k\sim\mu^k$ for each fixed $k$.
As in Lemma~\ref{lem:EZei}, for $k \geq 2$, we fix an arbitrary set of indices $J\subset\nset$ of size $|J|=k$. Denote by $\cY = \bigcup_{k\in J} X_k$ the set of random points in $\cX$ with indices in $J$.
Let $\cE$ be the event that all vertices in $\cY$ are isolated, and denote by $\cS$ the set of points in $[0,1)^2$ that are at distance at most $r$ from some vertex in $\cY$. We have $\ex[K_1]_k \sim n^k \pr(\cE)$. Note that in order for the event $\cE$ to happen, we must have $\cS \cap (\cX \setminus \cY) = \emptyset$.
To compute $\pr(\cE)$, we distinguish two cases:
\par\noindent
\emph{Case 1:} Suppose that $\forall i \neq j \in J$, $d(X_i,X_j) > 4r$. In this case, $\ar{\cS} =kr^2 \pi$, and thus the probability of $\cE$ is  $(1-kr^2 \pi)^{n-k} \sim e^{-kr^2 \pi n}$.
\par\noindent
\emph{Case 2:} Otherwise there exists $\exists i \neq j \in J$ such that $d(X_i,X_j) \leq 4r$. Define $J' =\{j \in J \mid \exists i \in J, i < j,\,d(X_i, X_j) \leq 4r\}$ and let $\ell=|J'|$. Note that $1\le \ell \le k-1$. Let $j'$ be the smallest element of $J'$ and let $i' < j'$ be the (smallest) element of $J$ with $d(X_{i'},X_{j'}) \leq  4r$. Denote by $C_{i'}$ the circle of radius $r$ centered at $X_{i'}$, and consider the halfcircle of radius $r$ centered at $X_{j'}$ delimited by the line going through $X_{j'}$, perpendicular to the line connecting $X_{i'}$ with $X_{j'}$, and which does not intersect $C_{i'}$ (note that $d(X_{i'},X_{j'}) > r$, so this halfcircle exists). This circle and halfcircle contribute to $\ar{\cS}$ by $\frac32 r^2 \pi$, and thus in total $\ar{\cS} \geq (k-\ell+\frac12)r^2 \pi$.
Moreover, the probability that any $j \in J$ to belongs to $J'$ is at most $\Theta(r^2)$. Hence, if we denote by $\mathcal J_\ell$ the event that such a set $J'$ with $|J'| = \ell$ exists, we have for any $1\le\ell\le k-1$,
\[ 
\pr(\cE \mid \mathcal J_\ell)\pr(\mathcal J_\ell)
\leq (1-(k-\ell+1/2))^{r^2 \pi n}\Theta(r^2)^\ell = o(e^{-kr^2 \pi n}).
\]
Then, the main contribution to $\pr(\cE)$ comes from Case 1, and therefore $\ex K_1\sim n^k e^{-kr^2 \pi n} = \mu^k$, so the random variable $K_1$ is asymptotically Poisson with parameter $\mu$. By Theorem~\ref{thm:static2}, a.a.s.\ $\nRG$ consists only of isolated vertices and a solitary component, and the second statement in the result is proven.

The first and third statements follow directly from the fact that, for any $\mu=\Theta(1)$, $\pr(\nRG\text{ is connected})\sim e^{-\mu}$,  combined with the decreasing monotonicity of this probability with respect to $\mu$.
\remove{
To obtain (1) of Proposition~\ref{prop:wellknown}, for any arbitrary $\epsilon > 0$ one can find a constant $K=K(\epsilon,n)$ such that for $r=\sqrt{\frac{\log n+K}{\pi n}}$ we have $\pr(K_1 > 0) \leq \ex K_1 \leq \epsilon$. For such an $r$, for any $\ell > 1$, by Theorem~\ref{thm:static2} we have that $\pr(\tK_\ell > 0)=o(1)$. Hence, since $\epsilon$ is arbitarily small, $\nRG$ is connected a.a.s.

For (2) of Proposition~\ref{prop:wellknown}, in order to prove that we have a Poisson number of isolated vertices, it suffices to compute the factorial moments of $K_1$ for $\nRG$ with $r=\sqrt{\frac{\log n+\Theta(1)}{\pi n}}$. We have $\ex K_1 \sim \mu=n e^{-r^2 \pi n}$. As in Lemma~\ref{lem:EZei}, for $k \geq 2$, we fix an arbitrary set of indices $J\subset\nset$ of size $|J|=k$. Denote by $\cY = \bigcup_{k\in J} X_k$ the set of random points in $\cX$ with indices in $J$.
Let $\cE$ be the event that all vertices in $\cY$ are isolated, and denote also by $\cS$ the set of points in $[0,1)^2$ that are at distance at most $r$ from some vertex in $\cY$. We have $\ex[K_1]_k \sim n^k \pr(\cE)$. Note that in order for the event $\cE$ to happen, we must have $\cS \cap (\cX \setminus \cY) = \emptyset$.
To compute $\pr(\cE)$, we distinguish two cases: \\
\textbf{Case 1: $\forall i \neq j \in J$ we have $d(X_i,X_j) > 10r$:}\\
In this case, $\ar{\cS} =kr^2 \pi$, and thus $\pr(\cE) \sim e^{-kr^2 \pi n}$.\\
\textbf{Case 2: $\exists i \neq j \in J$  with $d(X_i,X_j) \leq 10r$:}\\
Denote by $J' =\{j \in J \mid \exists i \in J, i < j,\,d(X_i, X_j) \leq 10r\}$. Note that $|J'| = \ell \geq 1$. Let $j'$ be the smallest element of $J'$ and let $i' < j'$ be the (smallest) element of $J$ with $d(X_{i'},X_{j'}) \leq  10r$. Denote by $C_{i'}$ the circle of radius $r$ centered at $X_{i'}$ and consider the halfcircle of radius $r$ centered at $X_{j'}$ delimited by the line going through $X_{j'}$, perpendicular to the line connecting $X_{i'}$ with $X_{j'}$, and which does not intersect $C_{i'}$ (note that $d(X_{i'},X_{j'}) > r$, and therefore this halfcircle exists). Therefore, because of $X_{i'}$ and $X_{j'}$ we already have $\ar{\cS} \geq \frac32 r^2 \pi$, and thus in total $\ar{\cS} \geq (k-\ell+\frac12)r^2 \pi$.
Moreover, the probability of any $j \in J$ to belong to $J'$ is at most $\Theta(r^2)$. Thus, if we denote by $\mathcal J_\ell$ the event that such a set $J'$ with $|J'| = \ell \geq 1$ exists, we have for any $\ell \geq 1$,
 \[ 
 \pr(\cE \mid \mathcal J_\ell)\pr(\mathcal J_\ell)
  \leq (1-(k-\ell+1/2))^{r^2 \pi n}\Theta(r^2)^\ell = o(e^{-kr^2 \pi n}),
  \] which also holds after taking a union bound over all $\ell$. \\
  Therefore, for any $k$, we have $\ex[K_1]_k \sim n^k e^{-kr^2 \pi n}$, and thus the random variable $K_1$ is Poisson with parameter $\mu$. By Theorem~\ref{thm:static2}, a.a.s.\ $\nRG$ consists just of isolated vertices and a solitary component, and (2) is proven.
  
 Finally, to get (3), recall that $\ex K_1=\mu \rightarrow \infty$. Moreover, the above calculations of $\ex[K_1]_k$ also apply in this case, and thus $\sqrt{\var K_1}=o(\ex K_1)$. Hence, by Chebyshev's inequality, a.a.s.\ $K_1=\ex K_1 (1+o(1))$ and the result follows.
}
\end{proof}

\begin{proof}[Proof of Corollary~\ref{cor:hitting}.]
For any $\epsilon > 0$, one can find a large enough constant $\kappa=\kappa(\epsilon)$ such that $e^{-e^{\kappa}} < \epsilon/2$ and $1-e^{-e^{-\kappa}}< \epsilon/2$. Let $r_\ell=\sqrt{\frac{\log n-\kappa}{\pi n}}$ and $r_u=\sqrt{\frac{\log n+\kappa}{\pi n}}$.
By Proposition~\ref{prop:wellknown}, $K_1$ is asymptotically Poisson in $\nRGl$ and $\nRGu$, with parameter $\mu=e^{\kappa}$ and $\mu=e^{-\kappa}$ respectively. Therefore, in $\nRGl$ we have $\pr(K_1=0)\sim e^{-e^\kappa}<\epsilon/2$, and in $\nRGu$ we have $\pr(K_1>0)\sim 1-e^{-e^{-\kappa}}<\epsilon/2$.
Moreover, by Theorem~\ref{thm:static2}, a.a.s.\  both $\nRGl$ and $\nRGu$ consist only of isolated vertices and a giant solitary component.
Hence, with probability at least $1-\epsilon$, the random process $\nRGR$ has the following evolution: for $r\le r_\ell$, the graph stays disconnected; at $r=r_\ell$, there are only a few isolated vertices and a giant component; for $r$ between $r_\ell$ and $r_u$, all isolated vertices merge together or with other components; finally for $r\ge r_u$, the graph is connected.
For this particular evolution of the process, $r_c=r_i$ unless for an $r$ with $r_\ell < r < r_u$ some isolated vertices merge together and create a small component before being absorbed by the giant one. Then, it is sufficient for our purposes to show that a.a.s.\ any two isolated vertices in $\nRGl$ are at a distance bigger than $r_u$.

Define $Z$ to be the random variable that counts the pairs of vertices $i$ and $j$ which are both isolated in $\nRGl$ and such that $d(X_i,X_j) \leq r_u$. By the same argument as in the proof of Proposition~\ref{prop:wellknown}, setting $\cS$ to be the set of points in $[0,1)^2$ at distance at most $r_\ell$ from either $X_i$ or $X_j$, we obtain $\ar{\cS} \geq \frac32 r_\ell^2 \pi$. Moreover, since $r_\ell < d(X_i,X_j) \leq r_u$, $X_j$ must lie in an annulus of area $\Theta(1/n)$ around $X_i$, which occurs with probability $\Theta(1/n)$. Taking a union bound over all pairs of vertices $i$ and $j$,
\[
\pr(Z>0) \leq n(n-1) \left(1-\frac{3}{2}r_\ell^2\pi\right)^{n-2} \; \Theta(1/n) = \Theta\left(n^{-1/2}\right).
\]
Therefore, when gradually increasing $r$ from $r_\ell$ to $r_u$, a.a.s.\ no pair of isolated vertices in $\nRGl$ gets connected before joining the solitary component, and thus no component of size $2$ or larger (except for the solitary component) appears in this part of the process. Hence, with probability at least $1-\epsilon$, we have that $r_c=r_i$, and the statement follows, since $\epsilon$ can be chosen to be arbitrarily small.
\remove{OLD:
  By Proposition~\ref{prop:wellknown}, $K_1$ is Poisson with parameter $\mu$. For any $\epsilon > 0$, one can find a constant $K=K(\epsilon, n)$ such that for $r_u=r_u(n)=\sqrt{\frac{\log n+K}{\pi n}}$ we have $\pr(K_1 > 0) \leq \ex K_1 =e^{-K }< \epsilon$, and another constant $K'=K'(\epsilon,n)$ such that for $r_\ell=r_\ell(n)=\sqrt{\frac{\log n-K'}{\pi n}}$ we have $\pr(K_1=0)=e^{-e^{K'}} < \epsilon$. By Theorem~\ref{thm:static2}, a.a.s.\  both $\nRGl$ and $\nRGu$ consist only of isolated vertices and a solitary component. Since $\epsilon$ is arbitrary, a.a.s.\ $\nRGl$ is disconnected and $\nRGu$ is connected. 
  Now, we show that a.a.s.\  in $\nRGl$ any two isolated vertices are at distance bigger than $r_u$. Define by $Z$  the random variable that there exist two vertices $i$ and $j$ which are both isolated in $\nRGl$ and for whose corresponding positions $X_i$ and $X_j$ we have $d(X_i,X_j) \leq r_u$. By the same argument as in the proof of item (2) of Proposition~\ref{prop:wellknown}, setting $\cS$ to be the set of points in $[0,1)^2$ at distance at most $r_\ell$ from either $X_i$ or $X_j$, we obtain $\ar{\cS} \geq \frac32 r_\ell^2 \pi$. Moreover, since $r_\ell < d(X_i,X_j) \leq r_u$,  after fixing $X_i$, $X_j$ must be in an annulus of area $\Theta(1/n)$. Therefore,
  \[ \ex Z \leq n^2 (1-\frac{3}{2}r_\ell^2\pi)^{n-2}\Theta(1/n) \sim n^{-1/2},
  \]
  and thus $\pr(Z > 0) \leq \ex Z
=o(1)$. Therefore, a.a.s., when gradually increasing $r$ from $r_\ell$ to $r_u$, none of the isolated vertices in $\nRGl$ will connect to each other before connecting to the solitary component, and thus no component of size $2$ or larger (except for the solitary component) will appear in this process. Hence, a.a.s.\ $r_c=r_i$.
}
\end{proof}
\paragraph{Acknowledgment.} We thank an anonymous referee for suggesting the application of Theorem~\ref{thm:static2} to obtain Corollary~\ref{cor:hitting}.
\bibliographystyle{alpha}

\end{document}